\def\doubleunderline#1{\underline{\underline{#1}}}
\renewcommand{\ALG@beginalgorithmic}{\footnotesize}
\newcommand{\vast}{\bBigg@{4}}
\newcommand{\Vast}{\bBigg@{7}}
\journal{Journal of Computational Physics}
\begin{document}

\begin{frontmatter}

\title{Modeling transport of scalars in two-phase flows with\\ a diffuse-interface method}


\author{Suhas S. Jain\corref{cor1}}
\ead{sjsuresh@stanford.edu}
\cortext[cor1]{Corresponding author}
\author{Ali Mani}
\ead{alimani@stanford.edu}
\address{Center for Turbulence Research, Stanford University, California, USA 94305}

\begin{abstract}
In this article, we propose a novel scalar-transport model for the simulation of scalar quantities in two-phase flows with a phase-field method (diffuse-interface method). In a two-phase flow, the scalar quantities typically have disparate properties in two phases, which results in effective confinement of the scalar quantities in one of the phases, in the time scales of interest. This confinement of the scalars lead to the formation of sharp gradients of the scalar concentration values at the interface, presenting a serious challenge for its numerical simulations. 

To overcome this challenge, we propose a model for the transport of scalars. 
The model is discretized using a central-difference scheme, which leads to a non-dissipative implementation that is crucial for the simulation of turbulent flows. Furthermore, the provable strengths of the proposed model are: (a) the model maintains the positivity property of the scalar concentration field, a physical realizability requirement for the simulation of scalars, when the proposed criterion is satisfied, (b) the proposed model is such that the transport of the scalar concentration field is consistent with the transport of the volume fraction field, which results in the enforcement of the effective zero-flux boundary condition for the scalar at the interface; and therefore, prevents the artificial numerical diffusion of the scalar across the interface. 

Finally, we present numerical simulations using the proposed model in a wide range of two-phase flow regimes, spanning laminar to turbulent flows; and assess: the accuracy and robustness of the model, the validity of the positivity property of the scalar concentration field, and the enforcement of the zero-flux boundary condition for the scalar at the interface.
\end{abstract}

\begin{keyword}
phase-field method \sep scalars \sep two-phase flows \sep conservative schemes \sep positivity \sep electrokinetics \sep turbulent flows


\end{keyword}

\end{frontmatter}



\section{Introduction \label{sec:intro}}


The transport of scalars in a two-phase flow is an important problem that finds applications in wide range of natural phenomena and industrial processes. A scalar quantity can represent: temperature field in modeling boiling and evaporation phenomena \citep{villegas2016ghost},
dissolved gas concentration in modeling oceanic carbon sequestration process \citep{lal2008carbon}, salt concentration in modeling electrochemical systems \citep{fernandez2014bubble},
surfactant concentration in modeling Marangoni effects \citep{takagi2011surfactant}, etc.

The transport of scalars in a two-phase flow, typically, involves very disparate length and time scales; and the scalar quantities often experience very large and small diffusivities and mobilities in different phases. For example, the diffusion coefficient of \ch{CO2} in air is $1.6\times10^{-5}\ \mathrm{m^2/s}$; whereas, in water it is $1.6\times10^{-9}\ \mathrm{m^2/s}$, which is $\approx10,000$ times smaller than in air. Similarly, the thermal diffusivity of air, at room temperature, is $1.9\times10^{-5}\ \mathrm{m^2/s}$; whereas, that of water is $1.43\times10^{-7}\ \mathrm{m^2/s}$, which is $\approx100$ times smaller compared to that of air. In a worst-case scenario, the diffusivities in the two phases could be so different that the ratio of diffusivities in two phases could tend to infinity. These disparate properties of the scalar quantities in different phases result in scalars effectively being confined to one of the phases in the time scales of interest. This poses a numerically challenging task to resolve the gradient of the scalar concentration at the material interface, and it usually leads to numerical leakage (artificial numerical diffusion) at the interface and negative values of the scalar concentration field. 

Several studies in the past have quantified, and tackled this issue of numerical leakage in the context of a volume-of-fluid (VOF) method. \citet{alke2009vof} and \citet{bothe2013volume} used a piecewise linear interface calculation (PLIC) algorithm for the transport of species at the interface to prevent artificial mass transfer across the interface. \citet{hassanvand2012direct} demonstrated that with no special treatment for the scalar-transport equation, the scalar quantity artificially diffuses across the interface into the other phase, even though the diffusivity of the scalar in the other phase is set to zero. They proposed a fix to the issue by modifying the flux of the scalar and by setting it to zero at the interface. \citet{berry2013multiphase} combined the zero-flux boundary condition for charged ions with the ion-transport equation and derived a modified transport equation that implicitly achieves the ion-impenetrable boundary condition at the interface.  


In this study, however, we use a diffuse-interface method for accurate modeling of the interface between two fluids. Advantages of a diffuse-interface method over a geometric VOF method is the absence of geometric reconstruction of the interface that results in a low-cost, load-balanced, and a highly-scalable method, see \citet{jain2020conservative} for the parallel-scalability performance of a diffuse-interface method on large scale computing machines. However, a VOF method is known to be more accurate compared to a diffuse-interface method on grids of same size. Therefore, on a per-cost basis, both VOF and diffuse-interface methods are known to perform similarly for incompressible two-phase flows \citep{mirjalili2019comparison}. 
Compared to a level-set method, a diffuse-interface method also maintains discrete conservation of volume of each of the phases. For a more detailed discussion and a comparison of various interface-capturing methods, see the recent review by \citet{Mirjalili2017}. 




Unlike the numerous advances in modeling the transport of scalars in the context of a VOF method, there have been very few studies for scalar transport in a diffuse-interface method. Scalar equation algorithm (SEA) is a method that is similar to a diffuse-interface method, wherein the transport of volume fraction is done using the finite-difference schemes with no geometric reconstruction. \citet{pericleous1995free} used the SEA method for modeling free-surface flows and adopted the Van Leer total-variation diminishing (TVD) scheme for the discretization of the energy equation instead of an upwinding scheme to avoid the artificial heat loss across the interface due to numerical smearing. However, the mere use of a Van Leer TVD scheme for the discretization of the scalar-transport equation does not completely eliminate the artificial diffusion of the scalar across the interface \citep{mehdi2004modelling}. Most VOF methods tackle the issue of artificial numerical diffusion of the scalar across the interface by geometrically advecting the scalar concentration field in a manner consistent with the advection of the volume fraction field \citep{davidson2002volume,alke2009vof,bothe2013volume,berry2013multiphase}. However, in a diffuse-interface method, advection of the volume fraction field is performed non-geometrically with the use of finite-difference schemes, which is a main contributing factor in reducing the cost and improving the scalability of the method compared to a VOF method. But the lack of geometric fluxing in diffuse-interface method poses a challenge towards modeling scalars and to prevent the artificial numerical diffusion (numerical leakage) of the scalar across the interface.



To the best of our knowledge, still lacking are numerical methods with simulation capabilities that can accurately capture the behavior of the scalar concentration fields without any numerical leakage from one phase to the other while still maintaining the positivity of the scalar concentration values. For example, in the work of \citet{davidson2002volume}, negative values of the scalar were reset to zero to prevent the method from diverging. To address these deficiencies, we have developed a general scalar-transport model for two-phase flows, particularly for material interfaces modeled using a diffuse-interface method. The proposed model is general enough that the scalar can represent soluble surfactants, electrolytes, temperature, chemical species in combustion modeling, etc. We adopt the two-scalar approach \citep{davidson2002volume,alke2009vof,bothe2013volume,berry2013multiphase,ma2013numerical} where a separate scalar-transport equation is solved for each of the phases but each of the scalar-transport equations are solved in the entirety of the domain. Our newly developed scalar-transport model prevents the artificial numerical diffusion of scalars from one phase to the other, even in challenging flow environments such as in the presence of electrokinetic effects and in turbulent flows. We discretize the proposed model equation using a second-order central scheme in space due to its non-dissipative nature, that is crucial for the simulation of turbulent flows \citep{mittal1997suitability}. We have proved that the resulting discrete equation maintains the positivity of the scalar concentration field throughout the simulation, provided the grid resolution is sufficient to resolve the length scales present in the flow. We have thus derived a ``positivity criterion" that determines this length scale that needs to be resolved to maintain the positivity of the scalar concentration field for two-phase flows. 



The rest of this paper is organized as follows. The conservative diffuse-interface method, that is used for modeling two-phase flows in this work, is presented in Section \ref{sec:diffuse-interface}. The discrete representation of the scalar concentration field adopted in this work is formally presented in Section \ref{sec:transport} along with some of the previous scalar-transport models available in the literature. The newly proposed scalar-transport model in this work is presented in Section \ref{sec:new_model}; followed by the presentation of the proof of positivity for the scalar concentration field, and the definition of the positivity criterion in terms of the grid resolution requirements, in Section \ref{sec:positivity}. Finally, the adopted numerical strategy is summarized in Section \ref{sec:numerics}; and results from the numerical simulations are presented in Section \ref{sec:results}, illustrating the applicability of the proposed scalar-transport model for modeling passive and active scalars in a wide range of two-phase flow settings spanning laminar to turbulent flow regimes. The summary of results and conclusions are presented in Section \ref{sec:conclusion}.

\section{Conservative phase-field method\label{sec:diffuse-interface}}

The first step towards simulating scalars in a two-phase flow is to choose an interface-capturing method that can accurately simulate interfaces in challenging flow environments. In this work, we choose a phase-field for the reasons described in the previous section. Within the class of phase-field methods, the popular approaches are based on the Cahn-Hilliard equation by \citet{cahn1958free} and the Allen-Cahn equation by \citet{allen1979microscopic} which were originally proposed to describe the phase separation and coarsening phenomena in solids and the motion of antiphase boundaries in crystalline solids, respectively. More recently, these equations have been successfully used in fluid dynamics to model the transport of material interfaces between two fluids \citep{anderson1998diffuse,lowengrub1998quasi,chen1998applications,jacqmin1999calculation,liu2003phase,badalassi2003computation,yue2004diffuse,yang2006numerical,kim2012phase}. However, there are some downsides to the use of these equations. The Cahn-Hilliard equation conserves the volume of each phase but contains a fourth-order derivative term which requires special care for its discretization. The Allen-Cahn equation, on the other hand, does not conserve the volume of each phase but involves a second-order derivative term. To have the advantages of both equations, \citet{chiu2011conservative} developed the conservative phase-field method for incompressible two-phase flows by combining the reinitialization step of the conservative level-set method by \citet{olsson2005conservative} with the phase-field equation. 

In this work, we choose to use the conservative phase-field method for incompressible two-phase flows. In this method the phase-field equation is written as 
\begin{equation}
\frac{\partial \phi}{\partial t} + \vec{\nabla}\cdot(\vec{u}\phi) = \vec{\nabla}\cdot\Big[\Gamma\Big\{\epsilon\vec{\nabla}\phi - \phi(1 - \phi)\vec{n}\Big\}\Big],
\label{eq:phi}
\end{equation}
where $\phi$ is the phase-field variable which represents volume fraction of the given phase, $\Gamma\{\epsilon\vec{\nabla}\phi - \phi(1 - \phi)\vec{n}\}=\vec{a}(\phi)$ term on the right-hand side is the flux of the interface regularization (diffusion-sharpening) term, $\vec{n}=\vec{\nabla}\phi/|\vec{\nabla}\phi|$ is the unit normal vector to the interface, and $\Gamma$ and $\epsilon$ are the interface parameters, where $\Gamma$ represents an artificial regularization velocity scale and $\epsilon$ represents an interface thickness scale. This equation satisfies both $\phi_1$ and $\phi_2$, where $\phi_1$ and $\phi_2=1-\phi_1$ are the phase-field variables for phases $1$ and $2$, respectively. We can show that the interface regularization term satisfies $\vec{a}(\phi_1)=-\vec{a}(\phi_2)$ for phases $1$ and $2$. 
The phase-field equation [Eq. \eqref{eq:phi}] can be solved in conjunction with the momentum balance equation
\begin{equation}
\frac{\partial \rho\vec{u}}{\partial t} + \vec{\nabla}\cdot(\rho \vec{u} \otimes \vec{u} + p \mathds{1}) = \vec{\nabla}\cdot\doubleunderline\tau + \vec{\nabla}\cdot(\vec{S}\otimes\vec{u}),
\label{eq:mod_mom}
\end{equation}
where
\begin{equation}
    \vec{S}=\Gamma\left[\epsilon \vec{\nabla} \rho - \frac{(\rho_1 - \rho)(\rho - \rho_2)}{\rho_1 - \rho_2} \vec{n}\right]
\end{equation}
is the net mass regularization flux, $\rho_l$ is the density of the phase $l$ and is a constant, and $\rho=\sum_{l=1}^2 \rho_l\phi_l$ is the total density of the mixture. Here, $\vec{S}$ is added to the incompressible momentum equation so that the resulting momentum is transported in a manner consistent with the transport of volume fraction using the phase-field equation \citep{mirjalili2020consistent}. A similar consistency corrections for the momentum equation have been proposed for the Cahn-Hilliard and Allen-Cahn models in \citet{huang2020cahn} and \citet{huang2020allen}, respectively. The Cauchy stress tensor is written as $\doubleunderline\tau = 2\mu\mathbb{D} - 2\mu(\vec{\nabla}\cdot\vec{u})\mathds{1}/3$, where $\mu$ is the dynamic viscosity of the mixture evaluated using the one-fluid mixture rule \citep{kataoka1986local} as $\mu=\sum_{l=1}^2 \phi_l \mu_l$, $\mu_l$ is the dynamic viscosity of the phase $l$, $\mathbb{D}=\{(\vec{\nabla}\vec{u}) + (\vec{\nabla}\vec{u})^T\}/2$ is the strain-rate tensor, and $p$ is the pressure. In this work, both the phases are assumed to be incompressible. Therefore, a pressure field $p$ that satisfies the criterion of zero divergence of the velocity field can be computed using the fractional-step method of \citet{kim1985application}. However, this does not limit the applicability of the newly proposed scalar-transport model in this work for incompressible flows; it can also be used with the conservative diffuse-interface method proposed by \citet{jain2020conservative} for compressible two-phase flows because of the same form of the equilibrium kernel function\textemdash a hyperbolic-tangent function\textemdash for the phase field $\phi$ (see, Section \ref{sec:equilibrium}).


It can be shown that the phase field $\phi$ is bounded between $0$ and $1$ \citep{Mirjalili2020}, provided $\Gamma$ and $\epsilon$ are chosen such that they satisfy the criterion 
\begin{equation}
    \frac{\epsilon}{\Delta x} \ge \frac{\left(\frac{|u|_{\mathrm{max}}}{\Gamma} + 1\right)}{2},
    \label{eq:DIcrossover}
\end{equation}
where $\Delta x$ is the grid size and $|u|_{\mathrm{max}}$ is the maximum value of the magnitude of the velocity in the domain. This method also inherently satisfies the TVD property for the phase field \citep{jain2020conservative} with the use of a central-difference scheme for the discretization of all the operators and without having to add any flux limiters that destroy the non-dissipative nature of the scheme.

\section{Problem description\label{sec:transport}}

\begin{figure}
    \centering
    \includegraphics[width=0.4\textwidth]{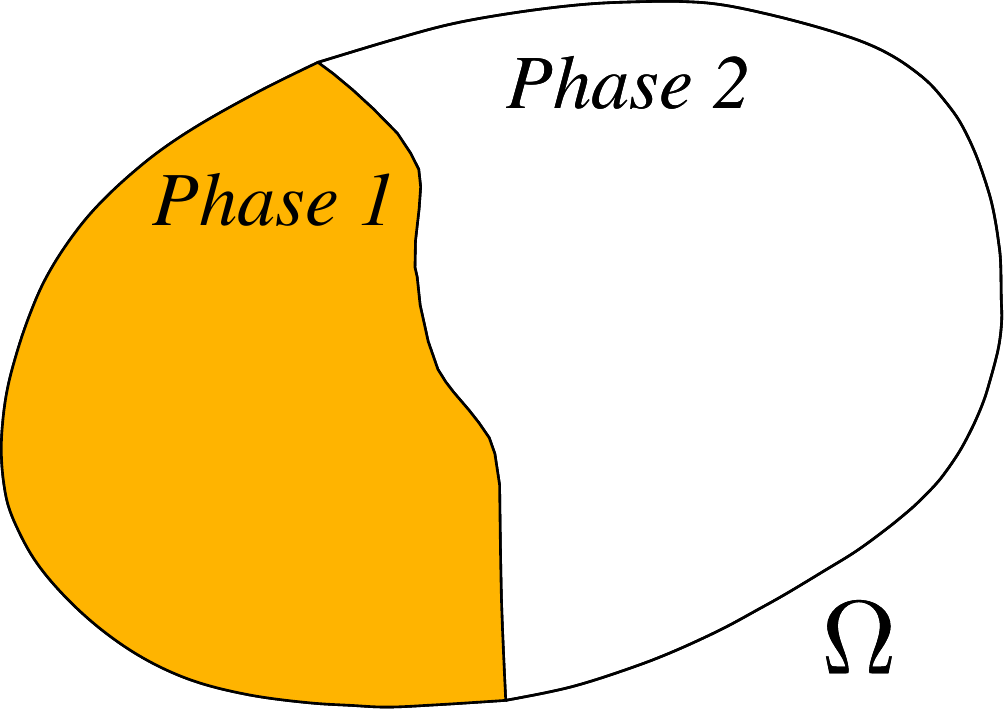}
    \caption{Schematic representing the domain filled with two phases and a scalar quantity that is confined to phase $1$. Color represents the scalar concentration.}
    \label{fig:domain}
\end{figure}
Consider the schematic of a domain, $\Omega$, with two phases $1$ and $2$ shown in Figure \ref{fig:domain}. Let $c$ represent the concentration (amount of scalar per unit volume) of a scalar quantity in the domain. The scalar is said to be conserved, $\int_{\Omega} c\  dV = \mathrm{constant}$, if it is not being generated or destroyed. As described in Section \ref{sec:intro} the ratio of diffusivites of the scalar in two phases is typically very large, hence the scalar will be confined to one of the phases as illustrated in the Figure \ref{fig:domain}. Now reformulating and generalizing the problem at hand, i.e., the diffusivity is, say, a finite value $D$ in phase $1$ and practically zero in phase $2$, then the scalar is confined only to phase $1$ in the domain. 

Since the scalar is confined to phase $1$, we can then define another variable $\tilde{c}$ that represents the local concentration of the scalar in phase $1$ (amount of scalar per unit volume of the phase $1$). Then, the relation between $c$ and $\tilde{c}$ is
\begin{equation}
    c = \phi \tilde{c},
    \label{eq:relation}
\end{equation}
where $\phi$ represents the volume fraction of the phase (volume of the phase per unit total volume) where the scalar is present (phase $1$ in this case). By definition, the value of $\tilde{c}$ is undefined in phase $2$. In the continuum limit of an infinitely sharp interface, $\phi$ reduces to a Heaviside function with a value of one inside phase $1$ and zero inside phase $2$. Therefore, in this limit, the scalar concentration can be represented as
\begin{equation}
    c = \Bigg\{
    \begin{aligned}
        & \tilde{c} \hspace{10mm} \mathrm{inside\ the\ phase\ where\ scalar\ is\ present}\\
        & 0 \hspace{10mm} \mathrm{elsewhere}.
    \end{aligned}
\label{eq:c_sharp}
\end{equation}

Now, assuming that $\tilde{c}$ satisfies a generic transport (advection-diffusion) equation within phase $1$, an evolution equation for $\tilde{c}$ can be written as
\begin{equation}
\frac{\partial \tilde{c}}{\partial t} + \vec{\nabla}\cdot(\vec{u}_c\tilde{c}) = \vec{\nabla} \cdot (D\vec{\nabla} \tilde{c}),
\label{equ:c_tilde}
\end{equation}
where $\vec{u}_c=\vec{u}+\vec{u}_r$ represents the total convective velocity of the scalar, $\vec{u}$ represents the fluid velocity, $\vec{u}_r$ represents any effective velocity with which the scalar is being advected relative to the fluid (e.g., electromigration velocity $\vec{u}_r= \nu \vec{E}$, where $\nu$ is the electrical mobility and $\vec{E}$ is the electric field), and $D$ represents the diffusivity of the scalar. Since the scalar is confined to phase $1$ with the interface acting as the boundary, this represents a classical boundary value problem. The appropriate no-flux boundary condition for this problem can be written as
\begin{equation}
(D\nabla\tilde{c} - \vec{u}_r\tilde{c})\cdot\vec{n}=0.
\label{equ:c_bc}
\end{equation}
Combining Eq. \eqref{equ:c_tilde} and Eq. \eqref{equ:c_bc}, \citet{berry2013multiphase} derived the modified transport equation for charged ions that can implicitly achieve the ion-impenetrable boundary condition as
\begin{equation}
\frac{\partial (\phi \tilde{c})}{\partial t} + \vec{\nabla}\cdot(\phi \vec{u}_c\tilde{c}) = \vec{\nabla} \cdot (D \phi \vec{\nabla} \tilde{c}). 
\label{equ:c_tilde_phi}
\end{equation}
They used this modified transport equation along with a sharp-interface method (coupled level-set and volume-of-fluid method) to simulate the electrokinetics of liquid-liquid systems.


In the discrete limit when the interface is not represented as a sharp boundary\textemdash more so with the use of diffuse-interface methods\textemdash however, the interface is not infinitely sharp; and the smallest interface thickness that can be handled on an Eulerian grid is of the order of grid-cell size. In this limit, Eq. \eqref{eq:c_sharp} would not hold anymore, which is illustrated in the schematic shown in Figure \ref{fig:scalar_def}.
\begin{figure}
    \centering
    \includegraphics[width=\textwidth]{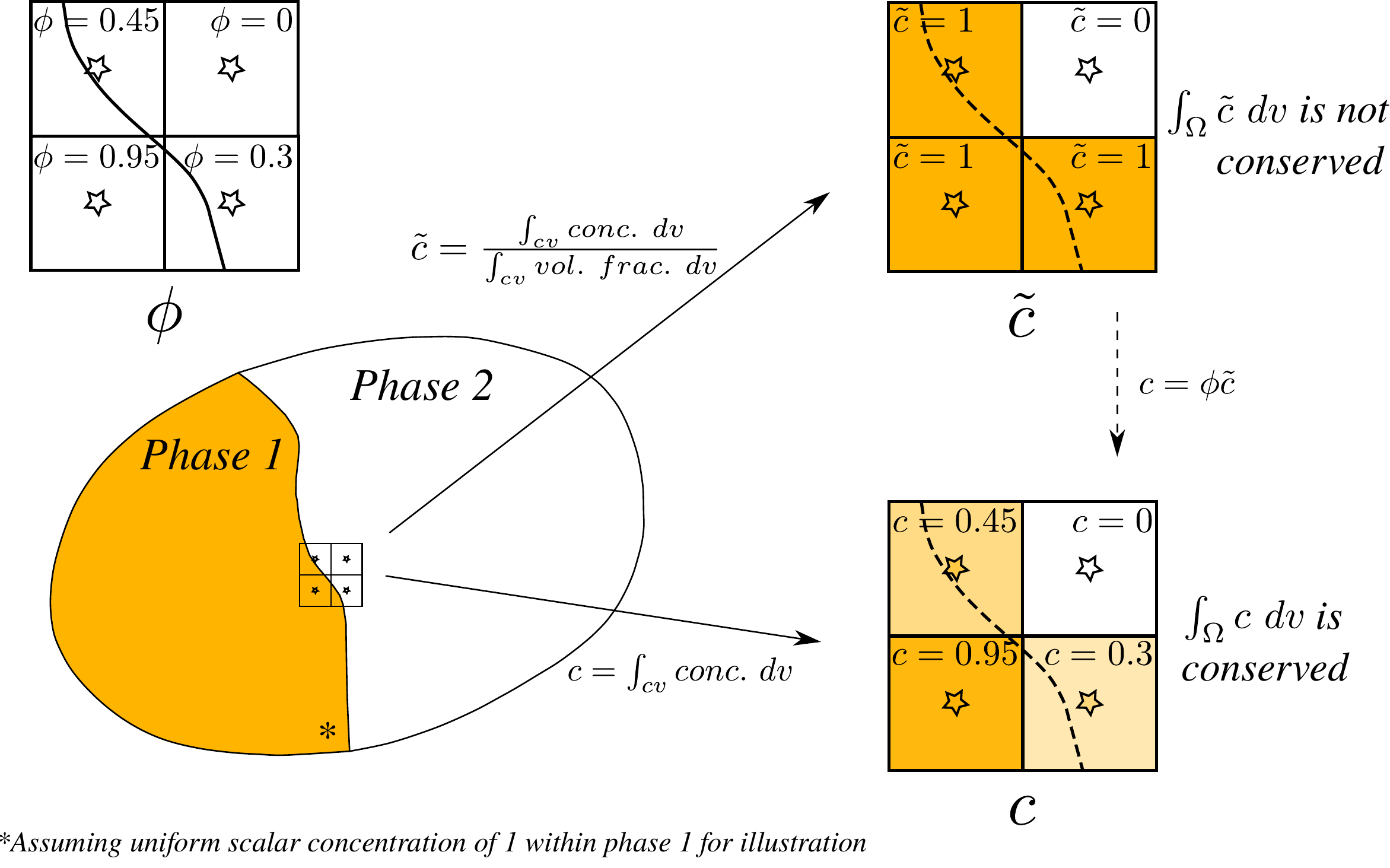}
    \caption{Schematic representing the discrete $c$ and $\tilde{c}$ quantities. Opacity of the color is proportional to the field value in that particular cell. Here, ``conc." represents the local scalar concentration, ``vol. frac." is the local volume fraction of the phase and $cv$ represents a control volume.}
    \label{fig:scalar_def}
\end{figure}
In this discrete limit and with the use of finite-difference schemes, i.e., without any special geometric treatment, we can observe that $\tilde{c}$ is not a conserved quantity, i.e., $\int_{\Omega} \tilde{c}\  dV \ne \mathrm{constant}$, because of the values that it takes in the cells that contain the interface. This is the reason most VOF methods use geometric fluxing for the transport of the scalar quantities. However, $\int_{\Omega} c\  dV = \mathrm{constant}$ still holds, and therefore, $c$ is still a conserved quantity in this discrete limit. Hence, we could instead write a transport equation for $c$ and look for modifications that mimic the no-flux boundary condition at the interface. In other words, we seek a transport equation for $c$ that results in transport of the scalar quantity consistent with the transport of the phase field (see Figure \ref{fig:equilibrium_solution}; we define consistency more rigorously in Section \ref{sec:equilibrium}), such that there is no artificial numerical diffusion of the scalar at the interface. 

With this notion, we could start with a generic form of the transport equation for $c$ as
\begin{equation}
\frac{\partial c}{\partial t} + \vec{\nabla}\cdot(\vec{u}_c c) = \vec{\nabla} \cdot (D\vec{\nabla} c).
\label{equ:c}
\end{equation}
A straightforward modification (a naive approach) to this equation could be to multiply $\phi$ to $\vec{u}_r$ and $D$, such that the flux $D\vec{\nabla}c-\vec{u}_r c$ goes to zero as $\phi$ goes to zero outside phase $1$ as
\begin{equation}
\frac{\partial c}{\partial t} + \vec{\nabla}\cdot( \vec{u} c + \phi \vec{u}_r c) = \vec{\nabla} \cdot (D \phi \vec{\nabla} c).
\label{equ:c_mod}
\end{equation}
Though this simple modification to achieve no-flux boundary condition sounds promising, the scalar concentration field $c$ that we obtain by solving Eq. \eqref{equ:c_mod} will not be consistent with the phase field$\phi$ and it results in the artificial numerical diffusion across the interface, especially for diffuse-interface methods. However, this approach has been used previously with a VOF method in the context of modeling heat and mass transfer across interfaces \citep{davidson2002volume}. 

Figure \ref{fig:fix} shows the application of Eq. \ref{equ:c_mod} for the evolution of the scalar quantity in a stationary one-dimensional drop. Since the initial concentration $c$ of the scalar was uniform everywhere within the drop, the initial state of $c$ should be maintained throughout the simulation. However, the model in Eq. \ref{equ:c_mod} results in predicting a lower value of $c$ with an $\mathcal{O}(1)$ error due to the numerical leakage of the scalar across the interface as illustrated in Figure \ref{fig:fix} (b), thus affecting the overall accuracy of the solution. For problems with a two-way coupling between the flow and the transport of scalar, the local concentration of the scalar field is crucial in predicting accurate flow fields. For example, the transport of reacting species in combustion modeling and ion transport in electrokinetics both modify the flow field depending on the local concentration of the scalar being transported. Therefore, the spurious leakage of the scalar and the resulting $\mathcal{O}(1)$ error in these cases, could be detrimental to the overall accuracy of the simulation. In the worst-case scenario, the numerical solution for the scalar concentration field can admit unphysical negative values near the interface and often within the phase with the lower diffusivity value, which might result in unrealizable scalar concentration fields. 
\begin{figure}
\vskip 0.1in
\centering
\includegraphics[width=\textwidth]{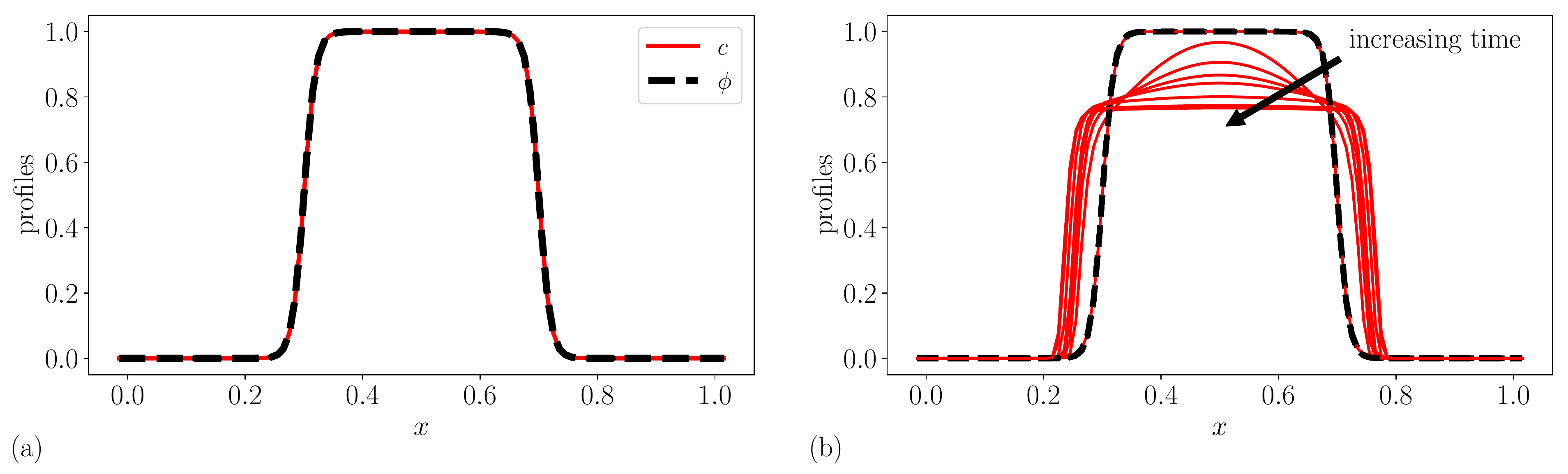}
\caption{Simulation results obtained by solving Eq. \eqref{equ:c_mod}, showing the one-dimensional drop ($\phi$ field) and the scalar concentration ($c$ field) at (a) initial time and (b) later times. Parameters used for this simulation are $D=0.01$, $\Delta x = 0.01$, $u=0$, and $u_r=0$. The arrow denotes the time evolution of the scalar concentration field.}
\label{fig:fix}
\end{figure}

One trivial approach to address this issue of unphysical negative values is to reset them to zero every time step \citep{davidson2002volume}. However, this would destroy the conservative property of the scalar quantity. Other common and more sophisticated ways to address this issue is to use flux limiters and positivity-preserving limiters \citep{laney1998computational}. However, the numerical diffusivity associated with these schemes inevitably adds to the unphysical leakage of the scalar across the interface into the impermeable phase. Hence, in the current work, we present a new model equation for the transport of scalars in two-phase flows along with a consistent numerical discretization scheme that overcomes the challenges that were presented here. 

\section{Proposed model equation for the transport of scalars in two-phase flows\label{sec:new_model}}

With the objective of developing a scalar-transport model that does not admit negative values for the scalar concentration field, nor permit unphysical leakage of the scalar across the interface, we propose a scalar-transport model for the scalar quantity that is confined to phase $1$ as
\begin{equation}
\frac{\partial c}{\partial t} + \vec{\nabla}\cdot(\vec{u} c + \phi \vec{u}_r c) = \vec{\nabla} \cdot \left[D \left\{\vec{\nabla}c - \frac{(1 - \phi) \vec{n} c}{\epsilon} \right\}\right],
\label{equ:interface}
\end{equation}
where $\vec{n}$ is the normal vector to the interface, $\epsilon$ is the same interface parameter that is present in Eq. \eqref{eq:phi}. The proposed model equation in Eq. \eqref{equ:interface} is a modification to Eq. \eqref{equ:c_mod}, which failed to prevent the numerical leakage of the scalar and maintain the no-flux boundary condition for the scalar at the interface (Figure \ref{fig:fix}). Away from the interface, in the bulk region of phase $1$ where the scalar is present ($\phi\rightarrow1$), the proposed model in Eq. \eqref{equ:interface} reduces to
\begin{equation}
\frac{\partial c}{\partial t} + \vec{\nabla}\cdot(\vec{u} c + \vec{u}_r c) = \vec{\nabla} \cdot \big(D\vec{\nabla}c\big),
\label{equ:bulk}
\end{equation}
which is the transport equation for scalars in a single-phase flow. Hence, the model has no adverse effect on the transport of scalars away from the interface.

Since we adopt the two-scalar approach, wherein a separate scalar equation is written for each of the phases, an equation for the scalar in phase $2$, represented by the phase field $1-\phi$, can be written by replacing $\phi$ with $1-\phi$ in Eq. \ref{equ:interface} as
\begin{equation}
\frac{\partial c'}{\partial t} + \vec{\nabla}\cdot\{\vec{u} c' + (1 - \phi) \vec{u}_r c'\} = \vec{\nabla} \cdot \left\{D \left(\vec{\nabla}c + \frac{\phi \vec{n} c'}{\epsilon} \right)\right\},
\label{equ:interface2}
\end{equation} 
where $c'$ represents the concentration of the scalar quantity that is confined to phase 2. Furthermore, the transfer of scalar across the interface can also be straightforwardly accounted for in this two-scalar approach by introducing sink/source terms  \citep{davidson2002volume} into the scalar-transport model in Eqs. \eqref{equ:interface},\eqref{equ:interface2} and will be explored in a future work.

\subsection{Consistency and equilibrium solution\label{sec:equilibrium}}

As described in Section \ref{sec:transport}, the consistency of the scalar concentration $c$ and the phase field $\phi$ is crucial in preventing the unphysical numerical leakage of the scalar across the interface. The $c$ and $\phi$ are formally said to be \textit{consistent} if they possess the same equilibrium kernel function\textemdash a hyperbolic tangent function\textemdash such that the local concentration field $\tilde{c}$, which is defined as the ratio $\tilde{c}=c/\phi$, stays constant across the interface. The local concentration field $\tilde{c}$ then represents the interfacial concentration of the scalar quantity. The proposed scalar-transport model in Eq. \eqref{equ:interface} was derived such that $c$ satisfies this consistency condition.

To realize the above defined consistency between $c$ and $\phi$, we can look at the steady-state equilibrium solutions to the phase-field equation and the proposed scalar-transport equation. At steady state, for $\vec{u}=0$ and $\vec{u}_r=0$ and in one dimension, the phase-field equation in Eq. \eqref{eq:phi} reduces to the form
\begin{equation}
0 = \vec{\nabla} \cdot \left[\Gamma\epsilon \left\{\vec{\nabla}\phi - \frac{(1 - \phi) \vec{n} \phi}{\epsilon} \right\}\right] \Rightarrow \frac{d^2 \phi}{d x^2} - \frac{1}{\epsilon} \frac{d\left\{(1-\phi)\phi\right\}}{d x} = 0,
\label{equ:phi_steady}
\end{equation}
for $\vec{n}=+1$, and the proposed scalar-transport equation in Eq. \eqref{equ:interface} reduces to the form
\begin{equation}
0 = \vec{\nabla} \cdot \left[D \left\{\vec{\nabla}c - \frac{(1 - \phi) \vec{n} c}{\epsilon} \right\}\right] \Rightarrow \frac{d^2 c}{d x^2} - \frac{1}{\epsilon} \frac{d\left\{(1-\phi)c\right\}}{d x} = 0.
\label{equ:c_steady}
\end{equation}

\begin{figure}
    \centering
    \includegraphics[width=0.6\textwidth]{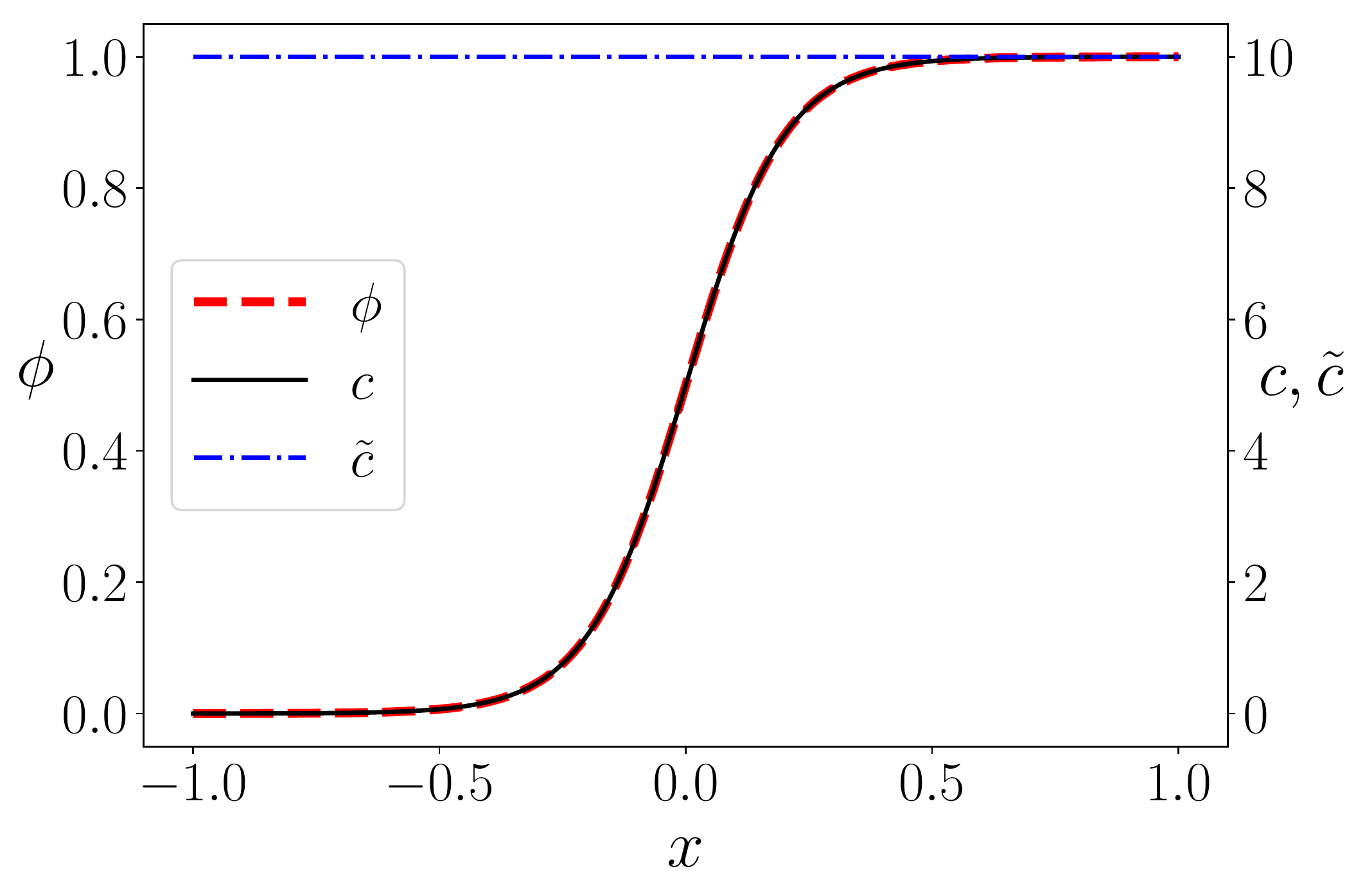}
    \caption{Equilibrium solutions for $\phi$ and $c$, showing consistency between the scalar concentration field and the volume fraction field, therefore, a constant value of the local concentration $\tilde{c}=c/\phi$. Here, $c_0$ is chosen to be equal to $5$ for the sake of illustration.}
    \label{fig:equilibrium_solution}
\end{figure}

Now, assuming that the interface is at the origin as shown in Figure \ref{fig:equilibrium_solution}, and integrating Eq. \ref{equ:phi_steady} along with the boundary conditions
\begin{equation}
    \phi = \Bigg\{
    \begin{aligned}
        & 0 \hspace{10mm} x \rightarrow -\infty \\
        & 0.5 \hspace{10mm} x = 0,
    \end{aligned}
\end{equation}
we obtain
\begin{equation}
\phi = \frac{e^{(x/\epsilon)}}{1 + e^{(x/\epsilon)}} = \frac{1}{2} \left\{1 + \tanh{\left(\frac{x}{2\epsilon}\right)}\right\}.
\label{equ:phi_sol}
\end{equation}
Using the equilibrium solution for $\phi$ in Eq. \eqref{equ:phi_sol} and solving for $c$ by integrating the Eq. \eqref{equ:c_steady} and using the boundary conditions 
\begin{equation}
    c = \Bigg\{
    \begin{aligned}
        & 0 \hspace{10mm} x \rightarrow -\infty \\
        & c_0 \hspace{10mm} x = 0,
    \end{aligned}
\end{equation}
we obtain
\begin{equation}
c = 2c_0\frac{e^{(x/\epsilon)}}{1 + e^{(x/\epsilon)}} = c_0 \left\{1 + \tanh{\left(\frac{x}{2\epsilon}\right)}\right\}.
\label{equ:c_sol}
\end{equation}
Hence, the equilibrium kernel functions of $c$ and $\phi$ are both hyperbolic tangent functions of the spatial coordinate along the interface normal (here $x$), and therefore, are consistent at the interface. This results in a constant value of $\tilde{c}=c/\phi=2c_0$ across the interface (see Figure \ref{fig:scalar_def}). If we choose $c_0$ to be equal to $0.5$, the equilibrium kernel functions of $c$ and $\phi$ are identical, and the resulting interfacial scalar concentration is $\tilde{c}=1$. For other values of $c_0$, $c$ is only scaled by a constant factor, but is still a hyperbolic tangent function as illustrated in Figure \ref{fig:equilibrium_solution}. Hence, the scalar concentration field obtained from solving the proposed model in Eq. \eqref{equ:interface} is consistent with the phase field from Eq. \eqref{eq:phi}, which results in the transport of the scalar inside the confined phase without any unphysical numerical leakage across the interface that would be seen otherwise.  

\section{Positivity of scalars\label{sec:positivity}}

Positivity of the scalar concentration field is a crucial realizability requirement that needs to be satisfied at all times in the simulation. A common approach to achieve this is to use flux limiters and positivity-preserving limiters \citep{laney1998computational}. However, these limiters add artificial numerical dissipation to the scheme and also lead to unphysical numerical leakage of the scalar across the interface. Hence, we use a central-difference scheme to discretize the operators in our system of equations because of its well-known non-dissipative property \citep{moin2016suitability}, and derive a criterion [Eq. \eqref{eq:crossover}; Figure \ref{fig:grid_resolution}] for the choice of grid size to be used in the simulation. We propose in Theorem \ref{theorem:positivity} that the criterion in Eq. \eqref{eq:crossover} is a sufficient condition for the proposed model transport equation in Eq. \eqref{equ:interface} to maintain the positivity of the scalar concentration field $c$ at all times during the simulation.

\newtheorem{theorem}{Theorem}[section]

\begin{theorem}
If $\phi^k_i$ is bounded between $0$ and $1$, $\forall k\in\mathds{Z}^+$ and $\forall i$, on a uniform one-dimensional grid, then $c^k_i\ge 0$ holds $\forall k\in\mathds{Z}^+$, where $k$ is the time-step index and $i$ is the grid index, provided
\begin{equation}
    \Delta x \le \left(\frac{2 D}{|u|_{\mathrm{max}} + |u_r|_{\mathrm{max}} + \frac{D}{\epsilon}}\right),
    \label{eq:crossover}
\end{equation}
and
\begin{equation}
      \Delta t \le \frac{\Delta x ^2}{2D}
    \label{eq:boundtime}
\end{equation}
are satisfied, where $\Delta x$ is the grid-cell size, $\Delta t$ is the time-step size, $|u|_{\mathrm{max}}$ and $|u_r|_{\mathrm{max}}$ are the maximum fluid velocity and the maximum relative velocity of the scalar in the domain, respectively.

\label{theorem:positivity}
\end{theorem}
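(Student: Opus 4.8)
\emph{Proof idea (plan).} The plan is to prove positivity by a discrete maximum-principle argument. After discretizing Eq.~\eqref{equ:interface} in one dimension with a forward-Euler step in time and central differences in space (the non-dissipative scheme described above), I would rewrite the update for $c^{k+1}_i$ as an explicit linear combination of the three neighbouring values $c^k_{i-1}$, $c^k_i$, $c^k_{i+1}$ at the previous time level. If every coefficient of that combination is nonnegative, then nonnegativity of $c$ propagates by induction on $k$: assuming $c^k_j\ge 0$ for all $j$ forces $c^{k+1}_i\ge 0$, with the base case supplied by the physically realizable nonnegative initial data $c^0_j\ge 0$. The whole theorem then reduces to checking the sign of three coefficients.

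To set this up I would first merge the convective flux $uc+\phi u_r c$ and the interface-sharpening flux $D(1-\phi)nc/\epsilon$ into a single effective transport flux $Vc$, where
\[ V \;=\; u + \phi u_r + \frac{D(1-\phi)n}{\epsilon}, \]
so the model takes the advection--diffusion form $\partial_t c = -\partial_x(Vc) + \partial_x(D\,\partial_x c)$. Differencing the conservative advective part centrally and the diffusion with the standard three-point stencil gives
\[ c^{k+1}_i = c^k_i - \frac{\Delta t}{2\Delta x}\big(V_{i+1}c^k_{i+1}-V_{i-1}c^k_{i-1}\big) + \frac{D\,\Delta t}{\Delta x^2}\big(c^k_{i+1}-2c^k_i+c^k_{i-1}\big). \]
Collecting terms, the coefficient of $c^k_i$ is $1-2D\Delta t/\Delta x^2$, the coefficient of $c^k_{i-1}$ is $D\Delta t/\Delta x^2 + \Delta t V_{i-1}/(2\Delta x)$, and the coefficient of $c^k_{i+1}$ is $D\Delta t/\Delta x^2 - \Delta t V_{i+1}/(2\Delta x)$.

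The three nonnegativity requirements then map directly onto the two stated conditions. The diagonal coefficient is nonnegative exactly when $\Delta t\le \Delta x^2/(2D)$, which is Eq.~\eqref{eq:boundtime}. The two off-diagonal coefficients are nonnegative precisely when $|V|\le 2D/\Delta x$ at every node, i.e.\ when the cell P\'eclet-type number formed from the combined velocity is at most two. The last step is to bound $V$ uniformly: since $0\le\phi\le 1$ (the hypothesis), $0\le 1-\phi\le 1$, and $|n|=1$, the triangle inequality yields $|V|\le |u|_{\max}+|u_r|_{\max}+D/\epsilon$, so imposing $\Delta x\le 2D/(|u|_{\max}+|u_r|_{\max}+D/\epsilon)$\textemdash which is Eq.~\eqref{eq:crossover}\textemdash guarantees $|V|\le 2D/\Delta x$ everywhere and closes the argument.

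The part I expect to need the most care is the handling of the effective velocity $V$. It is not constant: it varies in space through $\phi$, $n$, $u$ and $u_r$, so the off-diagonal coefficients must be controlled node-by-node, and the clean final bound only appears after using $\phi\in[0,1]$ together with $|n|=1$ to absorb the sharpening contribution $D(1-\phi)/\epsilon$ into the velocity budget. A second subtlety is the precise central-difference discretization of the merged flux: if the face velocities are taken as arithmetic averages rather than pointwise nodal values, the collected coefficients pick up extra averaging terms, and one must verify that the same uniform bound on $|V|$ still forces their signs. Establishing that these discretization choices do not weaken the criterion is the main obstacle; the remainder is the routine maximum-principle bookkeeping sketched above.
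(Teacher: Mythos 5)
Your proposal is correct and follows essentially the same route as the paper's own proof: a forward-Euler/central-difference discretization rewritten as a three-point convex-type combination, with positivity of the coefficients enforced node-by-node using $\phi\in[0,1]$ and $|n|=1$, the diagonal coefficient yielding Eq.~\eqref{eq:boundtime} and the off-diagonal ones yielding Eq.~\eqref{eq:crossover}. Packaging the convective and sharpening fluxes into a single effective velocity $V$ is only a notational variant of the paper's separate bounding of the $(1-\phi)\,n$ term, and your worry about face-averaged velocities does not arise since the paper differences the pointwise nodal fluxes exactly as you do.
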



\begin{proof}
Consider the discretization of Eq. \eqref{equ:interface} on a one-dimensional uniform grid
\begin{equation}
\begin{aligned}
c^{k+1}_i = c^k_i + \Delta t \left[- \left(\frac{u^k_{i+1}c^k_{i+1} - u^k_{i-1}c^k_{i-1}}{2\Delta x} \right) -\left(\frac{{u^k_{r}}_{i+1}\phi^k_{i+1}c^k_{i+1} - {u^k_{r}}_{i-1}\phi^k_{i-1}c^k_{i-1}}{2\Delta x} \right)\right] \\
+ \Delta t \left[D\left(\frac{c_{i+1}^k -2c_{i}^k + c_{i-1}^k}{\Delta x^2}\right) - \frac{D}{\epsilon}\left\{\frac{(1 - \phi_{i+1}^k) n_{i+1}^k c_{i+1}^k - (1 - \phi_{i-1}^k) n_{i-1}^k c_{i-1}^k}{2\Delta x}\right\} \right].
\end{aligned}
\end{equation}
This can be rearranged as
\begin{equation}
c_i^{k+1} = \tilde{C}_{i-1}^k c_{i-1}^k + \tilde{C}_{i}^k c_{i}^k + \tilde{C}_{i+1}^k c_{i+1}^k,
\end{equation}
where $\tilde{C}$'s are coefficients given by
\begin{equation}
    \tilde{C}_{i-1}^k = \frac{\Delta t u^k_{i-1}}{2\Delta x} + \frac{\Delta t {u^k_r}_{i-1} \phi^k_{i-1}}{2\Delta x} + \frac{\Delta t D}{\Delta x^2} + \frac{\Delta t D}{2\epsilon\Delta x}(1 - \phi_{i-1}^k) n_{i-1}^k,
    \label{eq:ci-1}
\end{equation}
\begin{equation}
    \tilde{C}_{i+1}^k = -\frac{\Delta t u^k_{i+1}}{2\Delta x} -\frac{\Delta t {u^k_r}_{i+1}\phi^k_{i+1}}{2\Delta x} + \frac{\Delta t D}{\Delta x^2} - \frac{\Delta t D}{2\epsilon\Delta x}(1 - \phi_{i+1}^k) n_{i+1}^k
    \label{eq:ci+1}
\end{equation}
and
\begin{equation}
    \tilde{C}^k_i=1 - \frac{2 \Delta t D}{\Delta x^2}.
    \label{eq:ci}
\end{equation}
\newtheorem{lemma}{Lemma}[theorem]
\begin{lemma}
A scheme is said to maintain positivity (also called the ``boundedness" criterion in \citet{patankar1980numerical,versteeg2007introduction}) if $\tilde{C}$'s are all positive \citep{laney1998computational}.
\label{lemma:positive}
\end{lemma}

It is given that $0\le\phi^k_i\le1$ holds $\forall k\in\mathds{Z}^+$, which implies that $(1 - \phi_{i-1}^k) n_{i-1}^k \ge -1$. Using this in Eq. \eqref{eq:ci-1}, we obtain
\begin{equation}
    \begin{aligned}
    \tilde{C}_{i-1}^k \ge \frac{\Delta t u^k_{i-1}}{2\Delta x} + \frac{\Delta t {u^k_r}_{i-1} \phi^0_{i-1}}{2\Delta x} + \frac{\Delta t D}{\Delta x^2} - \frac{\Delta t D}{2\epsilon\Delta x}\\
    \Rightarrow \tilde{C}_{i-1}^k \ge -\frac{\Delta t }{2\Delta x} \bigg(|u|^k_{max} + |u_r|^k_{max} + \frac{D}{\epsilon}\bigg) + \frac{\Delta t D}{\Delta x^2}
    \end{aligned}
\end{equation}
Now, invoking the condition in Eq. (\ref{eq:crossover}), we can show that $\tilde{C}_{i-1}^k \ge 0$ holds. Using similar arguments, we can show that $\tilde{C}_{i+1}^k \ge 0$ holds. Invoking the condition in Eq. (\ref{eq:boundtime}), we can also show that $\tilde{C}_{i}^k \ge 0$ holds. Thus, Lemma \ref{lemma:positive} proves that $c^k_i\ge0$ is satisfied $\forall k\in\mathds{Z}^+$, which concludes the proof.

\end{proof}

Now, generalizing Theorem \ref{theorem:positivity} for three dimensions, the time-step restriction required for the positivity of $c$\textemdash assuming an isotropic mesh\textemdash can be written as
\begin{equation}
      \Delta t \le \frac{\Delta x ^2}{6D}.
    \label{eq:3Dboundtime}
\end{equation}
Note that $\phi$ is assumed to be bounded between $0$ and $1$ throughout the simulation. This boundedness of $\phi$ can be achieved by choosing appropriate values for the interface parameters $\Gamma$ and $\epsilon$ such that they satisfy the criterion in Eq. \eqref{eq:DIcrossover}. For more details on the boundedness of $\phi$, see \citet{Mirjalili2020} for incompressible flows and \citet{jain2020conservative} for compressible flows.

The criterion on the time-step size in Eq. \eqref{eq:boundtime} and Eq. \eqref{eq:3Dboundtime} are the Courant-Friedrich-Levy (CFL) conditions for the scalar diffusion process in one and three dimensions, respectively, and are typically already satisfied in an explicit time-marching scheme to achieve temporal stability. Hence, the only additional criterion that needs to be satisfied to maintain the positivity of the evolution of the scalar concentration field is the restriction on the grid size given in Eq. \eqref{eq:crossover}. In the proof of Theorem \ref{theorem:positivity}, a first-order Euler time-stepping scheme was used to arrive at the restrictions on the time-step size in Eq. \eqref{eq:boundtime} and Eq. \eqref{eq:3Dboundtime}; however, these criteria are sufficient to maintain the positivity of $c$ with most higher-order explicit time-stepping schemes since the diffusive CFL conditions for the scalar in Eq. \eqref{eq:boundtime} and Eq. \eqref{eq:3Dboundtime} are less restrictive for higher-order time-stepping schemes.

It was shown in Section \ref{sec:new_model} that the proposed model transport equation in Eq. \eqref{equ:interface} reduces to a generic scalar-transport equation [Eq. \eqref{equ:bulk}] for single-phase flows, in the bulk region away from the interface. Now, repeating the analysis in Theorem \ref{theorem:positivity} for the scalar-transport equation for the bulk region [Eq. \eqref{equ:bulk}], we can show that the criterion that needs to be satisfied to maintain the positivity of evolution of the scalar concentration field is the restriction on the grid size given by
\begin{equation}
        \Delta x \le \bigg(\frac{2 D}{|u|_{\mathrm{max}} + |u_r|_{\mathrm{max}}}\bigg).
        \label{eq:bulkcrossover}
\end{equation}

A similar analysis holds good if one is interested in the large-eddy simulation (LES) of scalars in a two-phase turbulent flow, instead of a direct numerical simulation (DNS). Here, the diffusivity $D$ of the scalar can be replaced by the effective diffusivity (sum of resolved and subgrid contributions) of the scalar in Eq. \eqref{eq:crossover} and Eq. \eqref{eq:bulkcrossover}.


\subsection{Spatial resolution requirements}

\begin{figure}
    \centering
    \includegraphics[width=0.6\textwidth]{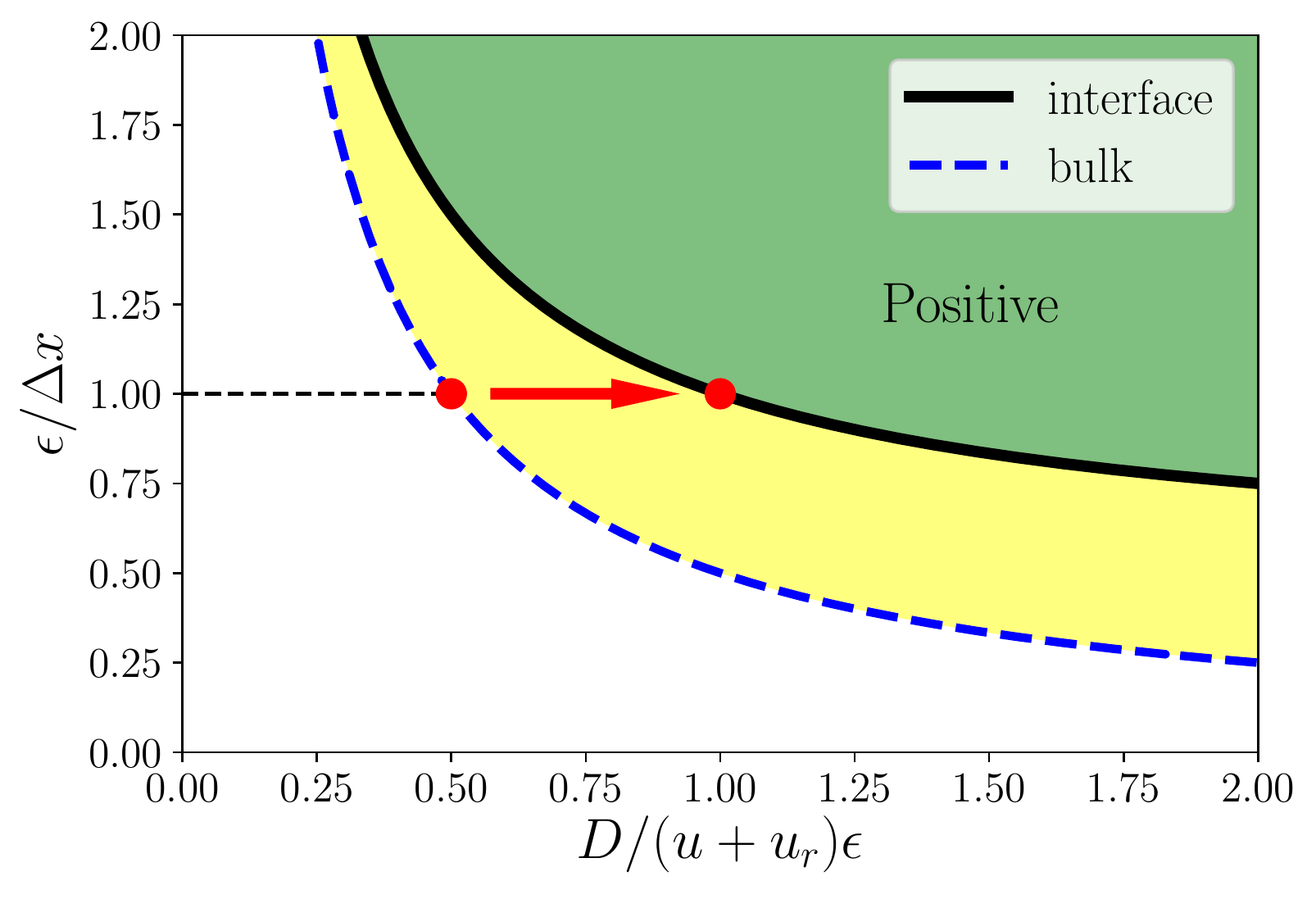}
    \caption{Graphical representation of the positivity criteria. The solid line is the positivity criterion in Eq. \eqref{equ:nondim_crossover} for the proposed scalar-transport model [Eq. \eqref{equ:interface}], and the dashed line is the positivity criterion in Eq. \eqref{equ:nondim_bulkcrossover} for the scalar-transport model in the bulk region away from the interface [Eq. \eqref{equ:bulk}]. The red dots represent the positivity criteria for the choice of $\epsilon=\Delta x$ and the arrow represents the additional restriction imposed on the grid size at the interface compared to the bulk region away from the interface.}
    \label{fig:grid_resolution}
\end{figure}

Now, rewriting the positivity criterion for the proposed scalar-transport model in Eq. \eqref{eq:crossover} in terms of the non-dimensional groups $[\epsilon/\Delta x]$ and $[D/\{(u + u_r)\epsilon\}]$ as
\begin{equation}
    \Big[\frac{\epsilon}{\Delta x}\Big] \ge \frac{1 + \Big[\frac{D}{(u + u_r)\epsilon}\Big]}{2\Big[\frac{D}{(u+u_r)\epsilon}\Big]}
    \label{equ:nondim_crossover}
\end{equation}
and the positivity criterion in Eq. \eqref{eq:bulkcrossover} for the bulk region away from the interface as
\begin{equation}
    \Big[\frac{\epsilon}{\Delta x}\Big] \ge \frac{1}{2\Big[\frac{D}{(u+u_r)\epsilon}\Big]},
    \label{equ:nondim_bulkcrossover}
\end{equation}
where $u=|u|_{max}$ and $u_r=|u_r|_{max}$, symbol $[\cdot]$ represents a non-dimensional group. By plotting $[\epsilon/\Delta x]$ vs $[D/\{(u + u_r)\epsilon\}]$ in Figure \ref{fig:grid_resolution}, we can see that any grid size $\Delta x$ for a given $\epsilon$ and $D$ that lies above the lines maintains the positivity of the scalar concentration field throughout the simulation. Since the solid line is above the dashed line in Figure \ref{fig:grid_resolution} for all values of $[D/\{(u + u_r)\epsilon\}]$, the proposed model transport equation in the full form in Eq. \eqref{equ:interface} imposes a more restrictive condition on the grid size $\Delta x$ for a given $\epsilon$ and $D$ than the reduced form of the transport equation for the bulk region in Eq. \eqref{equ:bulk}. Therefore, the criterion in Eq. \eqref{eq:crossover} should be used to choose the grid size throughout the domain.


In the conservative phase-field method, $\epsilon$ is typically chosen to be equal to $\Delta x$. With this choice of $\epsilon$, and recognizing that the non-dimensional group $[D/\{(u + u_r)\epsilon\}] = [D/\{(u + u_r)\Delta x\}]$ is the inverse of cell-Peclet number $Pe_c$, the positivity criterion for the proposed model in Eq. \eqref{eq:crossover} can be written in terms of cell-Peclet number as
\begin{equation}
    1 \ge \frac{1 + \Big[\frac{1}{Pe_c}\Big]}{2\Big[\frac{1}{Pe_c}\Big]} \Rightarrow Pe_c \le 1,
    \label{equ:pecletcriterion}
\end{equation}
and the positivity criterion in Eq. \eqref{eq:bulkcrossover} for the bulk region away from the interface can be written as
\begin{equation}
    1 \ge \frac{1}{2\Big[\frac{1}{Pe_c}\Big]} \Rightarrow Pe_c \le 2.
    \label{equ:single_pecletcriterion}
\end{equation}
This criterion in Eq. \eqref{equ:single_pecletcriterion} was first proposed by \citet{patankar1980numerical} and \citet{versteeg2007introduction} for a generic scalar-transport equation for single-phase flows. Hence, the presence of a material interface and the effective no-flux boundary condition for the scalar that prevents it from artificially diffusing across the interface has introduced a more restrictive criterion on the grid size in terms of the cell-Peclet number as $Pe_c\le1$, which otherwise would have been $Pe_c\le2$ in the absence of any interface. This is also graphically shown (arrow) in Figure \ref{fig:grid_resolution}. Therefore, assuming a uniform grid throughout the domain, the grid size for the simulation of scalars with material interfaces (two-phase flow) should be twice as small compared to the grid size for the simulation of scalars in the absence of material interfaces (single-phase flow). One could also use a twice-refined grid only around the interface using an adaptive-mesh refinement (AMR) technique instead of using a uniform grid throughout the domain, to reduce the cost of computation.

The restriction on the grid size in Eq. \eqref{equ:pecletcriterion} can also be motivated by the physical scales involved in the problem. A cell-Peclet number of $Pe_c\le1$, implies that the grid size is $\Delta x \le D/(u+u_r)$. Here, $D/(u+u_r)\sim l_c$ represents the characteristic length scale $l_c$ of the problem. Therefore the cell-Peclet number restriction is alluding to the fact that the grid size should be small enough to resolve the physical length scales present in the problem, i.e., $\Delta x \lesssim l_c$.

\section{Numerical strategy \label{sec:numerics}}

The scalar-transport model proposed in this work is implemented in the CTR-DIs2D and CTR-DIs3D solvers \citep{Jain2018cons,jain2020conservative}. These solvers can handle both compressible and incompressible flows. For incompressible flows, a finite-volume discretization strategy on a staggered grid has been employed wherein the phase field, the pressure field, and the scalar concentration fields are stored at the cell centers; and the components of the velocity field vector are stored at the cell faces where all the fluxes are evaluated. This choice of discretization is adopted, for incompressible flows, to avoid the spurious checkerboarding of the pressure field \citep{patankar1980numerical}. The pressure-Poisson equation is solved with a geometric-multigrid preconditioned conjugate gradient method using the HYPRE package \citep{falgout2002hypre}. 


We use the fourth-order Runge-Kutta (RK4) time-stepping scheme and the second-order central-differencing scheme for the discretization of the spatial operators. This choice of numerical scheme has some advantages, particularly for the simulation of turbulent flows due to its (a) non-dissipative nature, (b) low aliasing error, (c) easy boundary treatment, (d) low cost, and (e) improved stability \citep{moin2016suitability}. With the appropriate choice of $\Delta x$, $\Gamma$, and $\epsilon$, we can achieve the positivity for the scalar concentration field (see, Section \ref{sec:positivity}) and the boundedness and TVD properties for the phase field (see, Section \ref{sec:diffuse-interface}) even with the use of a central-difference scheme for the spatial operators, which would otherwise admit oscillatory solutions due to the associated dispersion errors. 

\section{Simulation results \label{sec:results}}

The proposed scalar-transport model can be used with a wide range of two-phase flows, from laminar to turbulent flow regimes. To illustrate this, multiple test cases are presented in this section, starting from the simple one-dimensional cases of a droplet advection with a scalar in Section \ref{sec:cases}. These cases were used to assess the validity of the positivity criterion of the proposed model by testing the model for various choices of parameters on the positivity map in Figure \ref{fig:grid_resolution}. This is followed by the two-dimensional cases of a bubble in a concentration-polarization region in Section \ref{sec:scalar_bubble} which involves modeling the transport of a scalar quantity around the bubble, and charged ions in a drop in Section \ref{sec:ion_drop}. To simulate charged ions, the proposed scalar-transport model was recast into a Nernst-Planck equation and was solved in conjunction with the Gauss's law. Finally, a three-dimensional case of a scalar in a turbulent two-phase flow is presented in Section \ref{sec:scalar_turbulent}. In all the above cases, the scalar is confined to one of the phases, and therefore, the ability of the model to prevent the unphysical numerical leakage of the scalar across the interface is also evaluated.


\subsection{Advection of a droplet along with a scalar\label{sec:cases}}


In this section, a one-dimensional drop of radius $R=0.25$ is advected in a periodic domain along with a scalar quantity that is confined to the drop, to test the validity of the positivity criterion in Eq. \eqref{equ:pecletcriterion}. Four different choices of the parameters are made as shown in Figure \ref{fig:cases}(a) in terms of the cell-Peclet number ($Pe_c=(u+u_r)\Delta x/D$), such that the two out of the four choices ($Pe_c=1,0.8$) satisfy the positivity criterion and the other two ($Pe_c=2,4$) violate the criterion. This test case is repeated with a zero and a non-zero relative velocity of the scalar with respect to the fluid, and with a non-uniform initial conditions for the scalar concentration field as further described in the subsequent Sections \ref{sec:zerorelvel}, \ref{sec:nonzerorelvel}, and \ref{sec:nonuniform}, respectively. For all the simulations in this section, the drop is initially placed in the domain centered at $x_0=0.5$ along with the scalar, and then advected with a uniform prescribed velocity. The scalar has a uniform concentration of $1$ inside the drop and $0$ outside, as shown in Figure \ref{fig:cases}(b), unless specified otherwise. The domain is discretized into a uniform grid of size $\Delta x=0.01$. The interface parameters are chosen as $\Gamma=100$ and $\epsilon=\Delta x$. 

\begin{figure}
    \centering
    \includegraphics[width=0.49\textwidth]{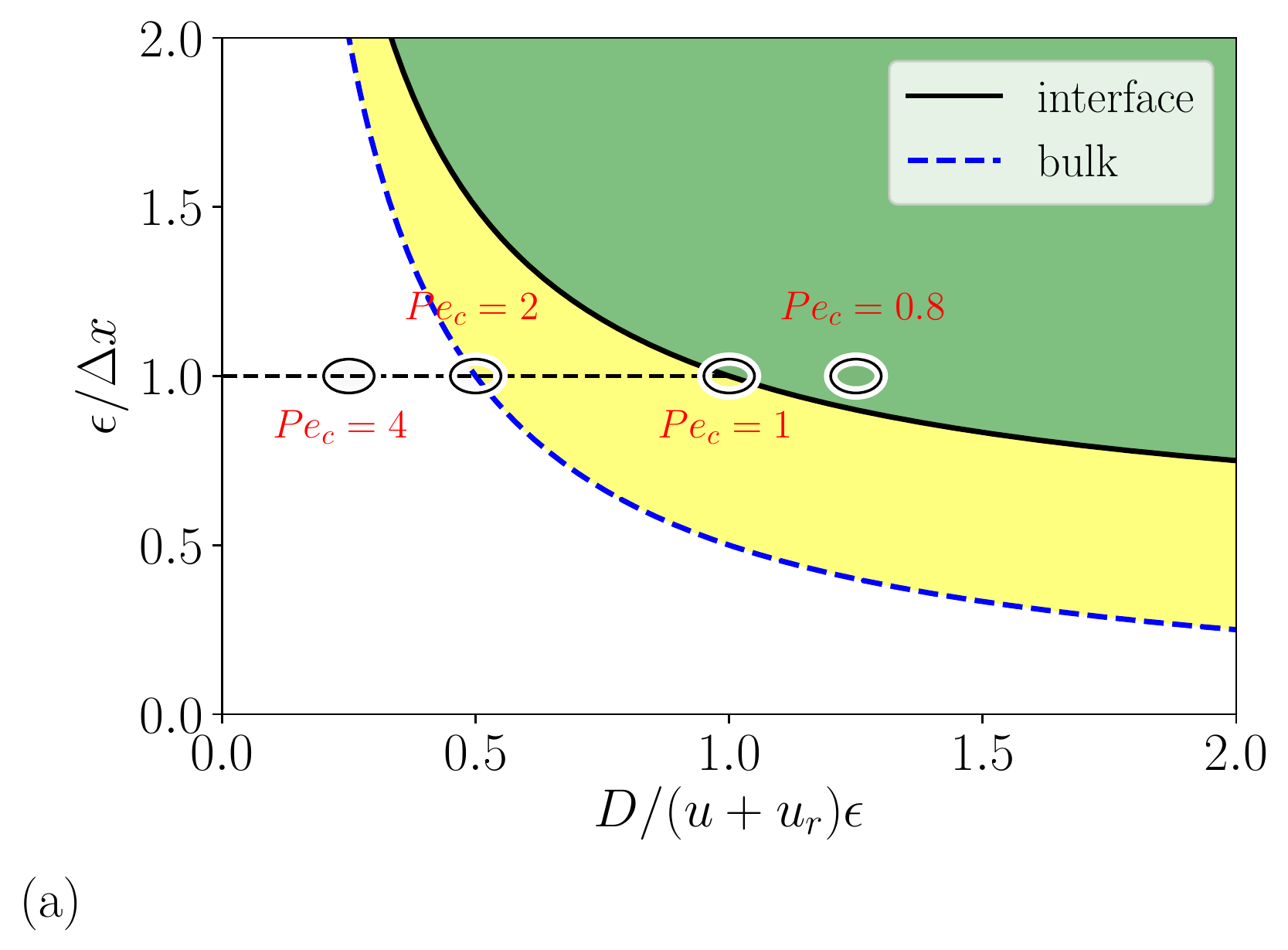}
    \includegraphics[width=0.49\textwidth]{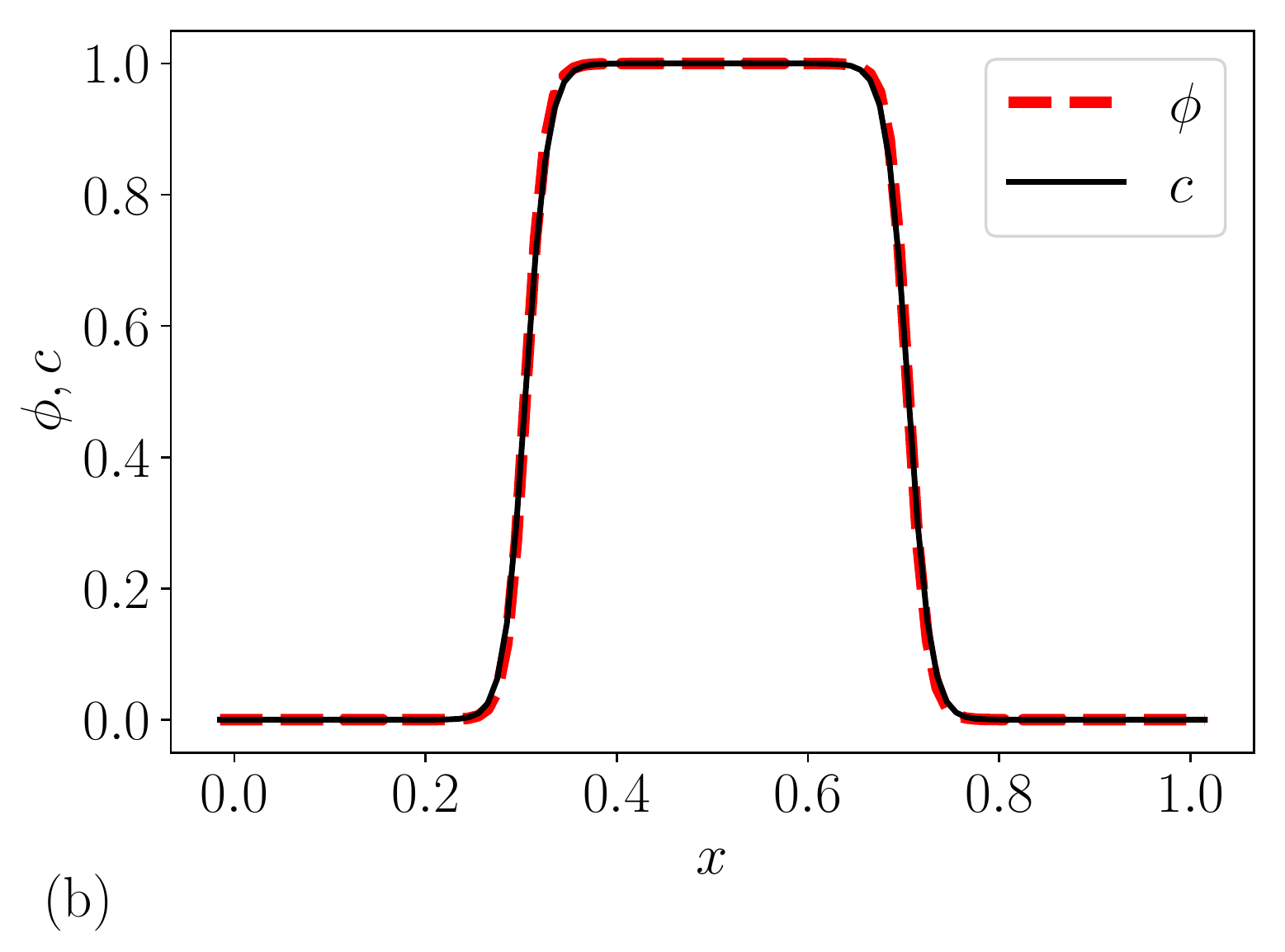}
    \caption{(a) Four different choices of the parameters in terms of the cell-Peclet number $Pe_c$, represented by ovals on the positivity plot, that are selected to test the validity of the positivity criterion. (b) Initial conditions for the volume fraction field $\phi$ and the scalar concentration field $c$.}
    \label{fig:cases}
\end{figure}

\subsubsection{Scalar advection along with the drop ($\vec{u}_r=0$)\label{sec:zerorelvel}}

Here in this section, the uniform fluid velocity is chosen to be $\vec{u}=100$, the relative velocity of the scalar with respect to the fluid is $\vec{u}_r=0$, the domain range is $[0,1]$, and the total integration time is $t_{\mathrm{end}}=10$. The diffusivity of the scalar is chosen based on the required cell-Peclet number. The drop advects to the right due to the imposed fluid velocity and returns to its original position at $t=0.01$ due to the periodic boundary condition, and this process repeats $1000$ times until the time $t=10$. The final state of the drop and the scalar concentration field at time $t=10$ is shown in Figure \ref{fig:cases14} along with the minimum value of the scalar concentration field for all four choices of the parameters shown in Figure \ref{fig:cases}(a). For the cases with $Pe_c=1$ and $0.8$, that satisfy the positivity criterion in Eq. \eqref{equ:pecletcriterion}, the scalar concentration $c$ is positive throughout the domain; and for the cases with $Pe_c=2$ and $4$, that violate the positivity criterion, the scalar concentration $c$ admits negative values close to the interface, as expected. 

\begin{figure}
    \centering
    \includegraphics[width=\textwidth]{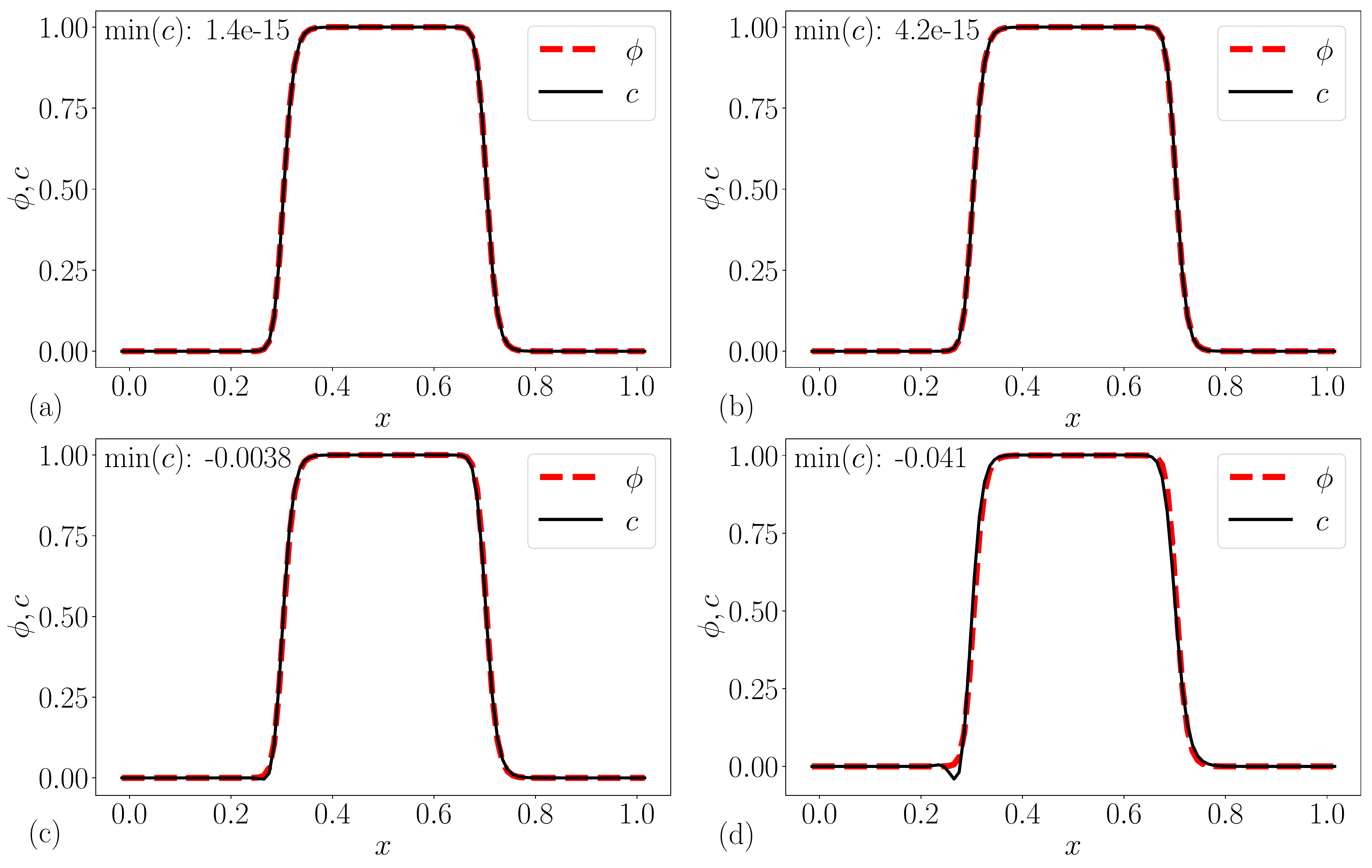}
    \caption{Final state of the drop and the scalar concentration field at time $t=10$ for the case of scalar advection along with the drop.  Four plots represent the four different parameters chosen to test the positivity of the scalar: (a) $Pe_c=1$, (b) $Pe_c=0.8$, (c) $Pe_c=2$, and (d) $Pe_c=4$.}
    \label{fig:cases14}
\end{figure}

Comparing the results in Figure \ref{fig:cases14} with those in Figure \ref{fig:fix}, where the scalar was found to artificially leak outside the drop resulting in an $\mathcal{O}(1)$ error for the scalar concentration values, it is easy to see the role and importance of the proposed scalar-transport model in maintaining the consistency between $c$ and $\phi$ and in the prevention of any artificial numerical leakage of the scalar. To further quantify the error and inconsistency between $c$ and $\phi$, $c-\phi$ and the local concentration $\tilde{c}$ are plotted in Figure \ref{fig:cases14error} for the same four choices of the parameters shown in Figure \ref{fig:cases}(a). Since $c$ is expected to be same as $\phi$ in this case, we could use $c-\phi$ as a metric to evaluate how far the solution deviates compared to the expected one. As can be seen in Figure \ref{fig:cases14error}, the solution is exact for $Pe_c=1$, but there is a small error for $Pe_c=0.8$ case. The error also seems to increase with the increasing value of $Pe_c$ for $Pe_c>1$. Small non-zero values of $c-\phi$ close to the interface imply that the scalar concentration field is out-of-equilibrium by a small amount. A similar behavior to $c-\phi$ can be seen in the plots of $\tilde{c}$. Here, $\tilde{c}$ is computed as 
\begin{equation}
\tilde{c} = \Bigg\{ 
    \begin{aligned}
        & \frac{c}{\phi} & \hspace{10mm}  \phi > 0.01 \\
        & 0 & \hspace{10mm} \mathrm{else},
    \end{aligned}
 \label{}   
\end{equation}
to avoid division by a small number.

\begin{figure}
    \centering
    \includegraphics[width=\textwidth]{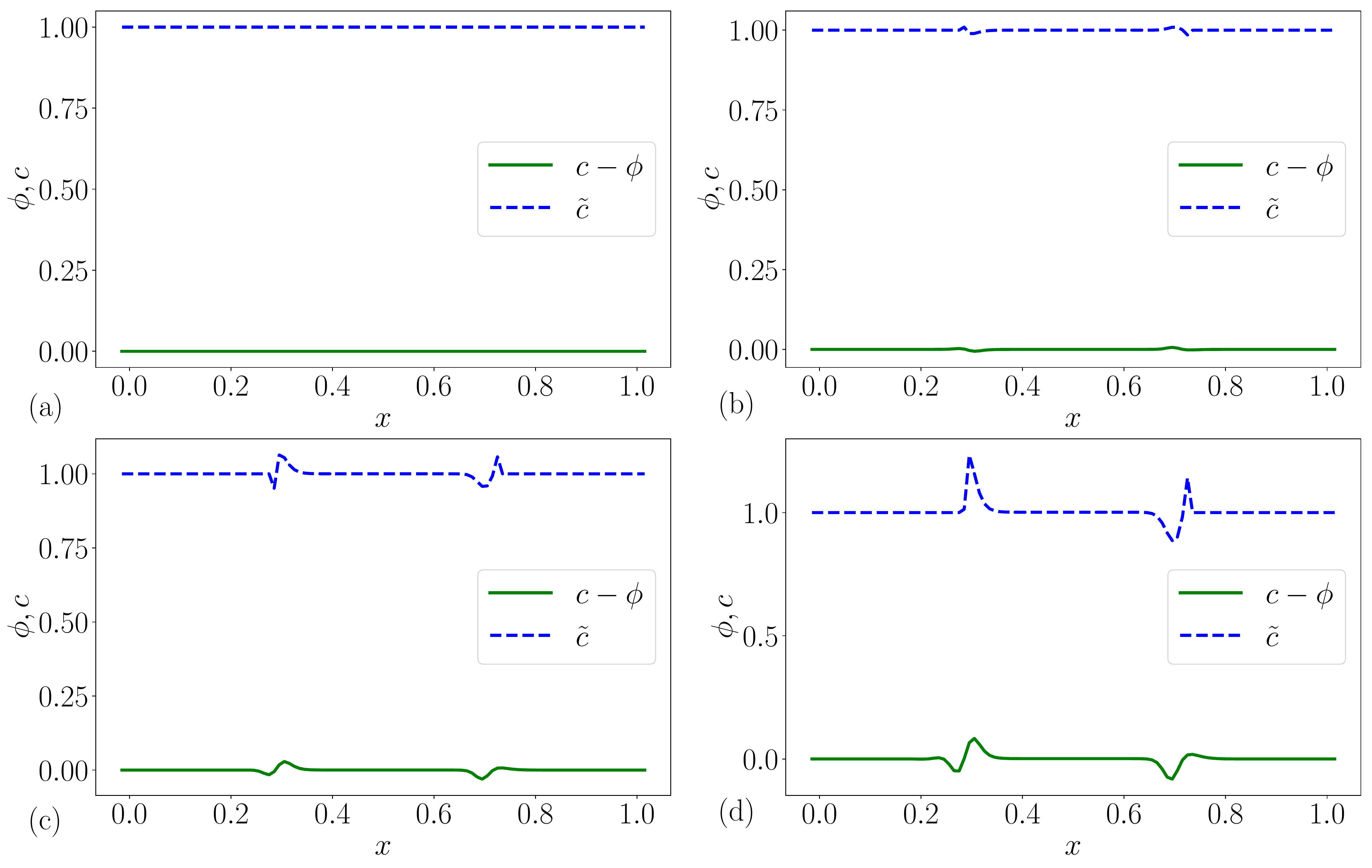}
    \caption{$c-\phi$ and $\tilde{c}$ at the final time of $t=10$ for the case of scalar advection along with the drop.  Four plots represent the four different parameters: (a) $Pe_c=1$, (b) $Pe_c=0.8$, (c) $Pe_c=2$, and (d) $Pe_c=4$.}
    \label{fig:cases14error}
\end{figure}

\subsubsection{Scalar advection relative to the drop ($\vec{u}_r\ne 0$)\label{sec:nonzerorelvel}}

Here, the uniform fluid velocity is chosen to be $\vec{u}=50$, the relative velocity of the scalar with respect to the fluid is $\vec{u}_r=50$, the domain range is $[0,1]$, and the total integration time is $t_{\mathrm{end}}=10$. The diffusivity of the scalar is chosen based on the required cell-Peclet number. The drop advects to the right due to the imposed fluid velocity and returns to its original position at $t=0.02$ due to the periodic boundary condition, and this process repeats $500$ times until the end of the simulation. The scalar also advects to the right, relative to the drop, and accumulates on the right end of the drop since it is confined to the drop. It reaches a steady state when the diffusion balances the advection due to the relative velocity. The final state of the drop and the scalar concentration field at time $t=10$ are shown in Figure \ref{fig:cases58} along with the minimum values of the scalar concentration field for all four choices of the parameters shown in Figure \ref{fig:cases}. For the cases with $Pe_c=1$ and $0.8$, that satisfy the positivity criterion, the scalar concentration $c$ is positive throughout the domain; and for the case with $Pe_c=4$, that violate the positivity criterion, the scalar concentration $c$ admits negative values for $Pe_c=4$, as expected. However the positivity is still maintained for $Pe_c=2$ though the criterion is violated. This is because the criterion is only a sufficient, not a necessary condition for the scalar concentration field to remain positive and therefore presents the most restrictive condition such that the positivity of the scalar is satisfied even in some situations when the criterion is violated. This shows the robustness of the positivity criterion that is crucial in maintaining the realizable values of the scalar concentration field throughout the duration of the simulation.
\begin{figure}
    \centering
    \includegraphics[width=\textwidth]{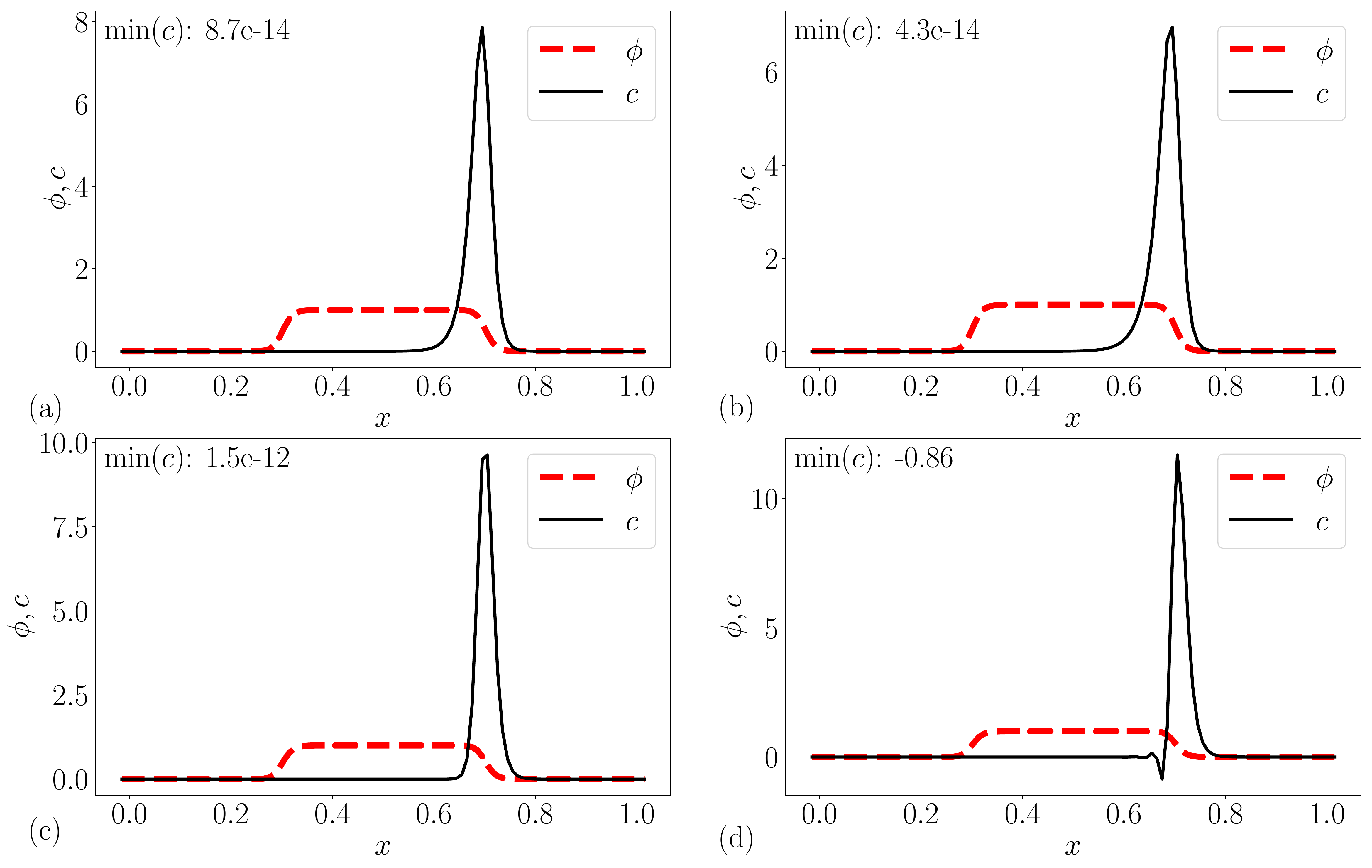}
    \caption{Final state of the drop and the scalar concentration field at time $t=10$ for the case of scalar advection relative to the drop. Four plots represent the four different parameters chosen to test the positivity of the scalar: (a) $Pe_c=1$, (b) $Pe_c=0.8$, (c) $Pe_c=2$, and (d) $Pe_c=4$ with $u_r\ne0$.}
    \label{fig:cases58}
\end{figure}

\subsubsection{Non-uniform scalar concentration within the drop\label{sec:nonuniform}}

The simulations in the previous Sections \ref{sec:zerorelvel} and \ref{sec:nonzerorelvel} had uniform initial scalar concentration fields within the drop. However, in reality, the scalar concentration could be non-uniform inside the drop. To evaluate the model for this scenario, we repeat the same test case in Section \ref{sec:nonzerorelvel} with $Pe_c=1$, except that the total integration time is $t_{\mathrm{end}}=0.05$ and a larger domain of range $[0,5]$ is chosen with periodic boundary conditions. The drop is placed at the same location of $x_0=0.5$ with a non-uniform scalar concentration field given by the function with a compact support
\begin{equation}
c = \Bigg\{ 
    \begin{aligned}
        & A e^{\big\{\frac{1}{(4x - 2)^2 - 1}\big\}} & \hspace{10mm}  x\in [0.25,0.75] \\
        & 0 & \hspace{10mm} \mathrm{else},
    \end{aligned}
 \label{}   
\end{equation}
as shown in Figure \ref{fig:non_uniform_case}(a). Here, $A$ is a factor chosen to make the quantities $\int_{\Omega} c\ dV$ and $\int_{\Omega} \phi\ dV$ discretely equal, such that as the scalar diffuses and reaches a steady state within the drop, we should expect to see the scalar concentration values reach a uniform value of $1$ within the drop. Initial conditions for $\phi$ and $c$, the time evolution of $c$, and the final state are shown in Figure \ref{fig:non_uniform_case}(a,b,c), respectively. 
\begin{figure}
\vskip 0.1in
\centering
\includegraphics[width=\textwidth]{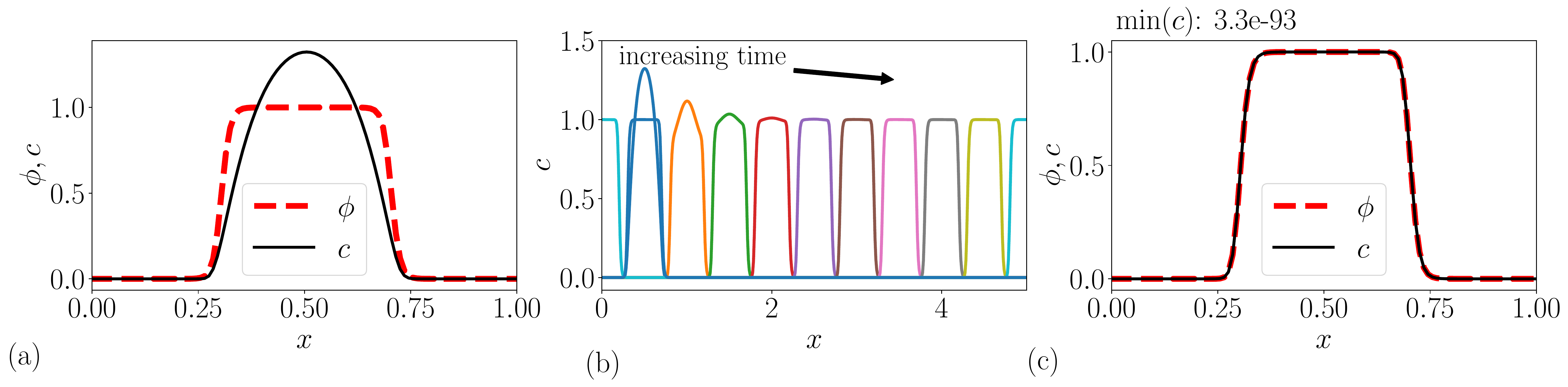}
\caption{The advection of a drop along with a non-uniformly distributed scalar quantity inside the drop. (a) The initial conditions for $\phi$ and $c$. (b) The time evolution of $c$. (c) The final states of $\phi$ and $c$.}
\label{fig:non_uniform_case}
\end{figure}
The scalar undergoes diffusion within the drop without any numerical leakage or encountering negative values, and at the final time, the scalar concentration is uniform within the drop, as expected, which illustrates the robustness of the proposed model.



\subsection{Bubble in a concentration polarization region \label{sec:scalar_bubble}}

In this section, a two-dimensional test case of transport of a passive scalar around the bubble in a channel is presented. A schematic of the domain is shown in Figure \ref{fig:scalar_around_bubble}. The domain is a square channel of size $L\times L$, where $L=0.1$. The streamwise direction has periodic boundary conditions and the other direction has no-slip walls on both ends. A gas bubble of radius $R=d/2=0.02$ is initially placed at the center of the channel, surrounded by a liquid. Because of the periodic boundary conditions, the single bubble in a channel setup essentially represents a train of equally-spaced bubbles in a channel. 

\begin{figure}
    \centering
    \includegraphics[width=0.4\textwidth]{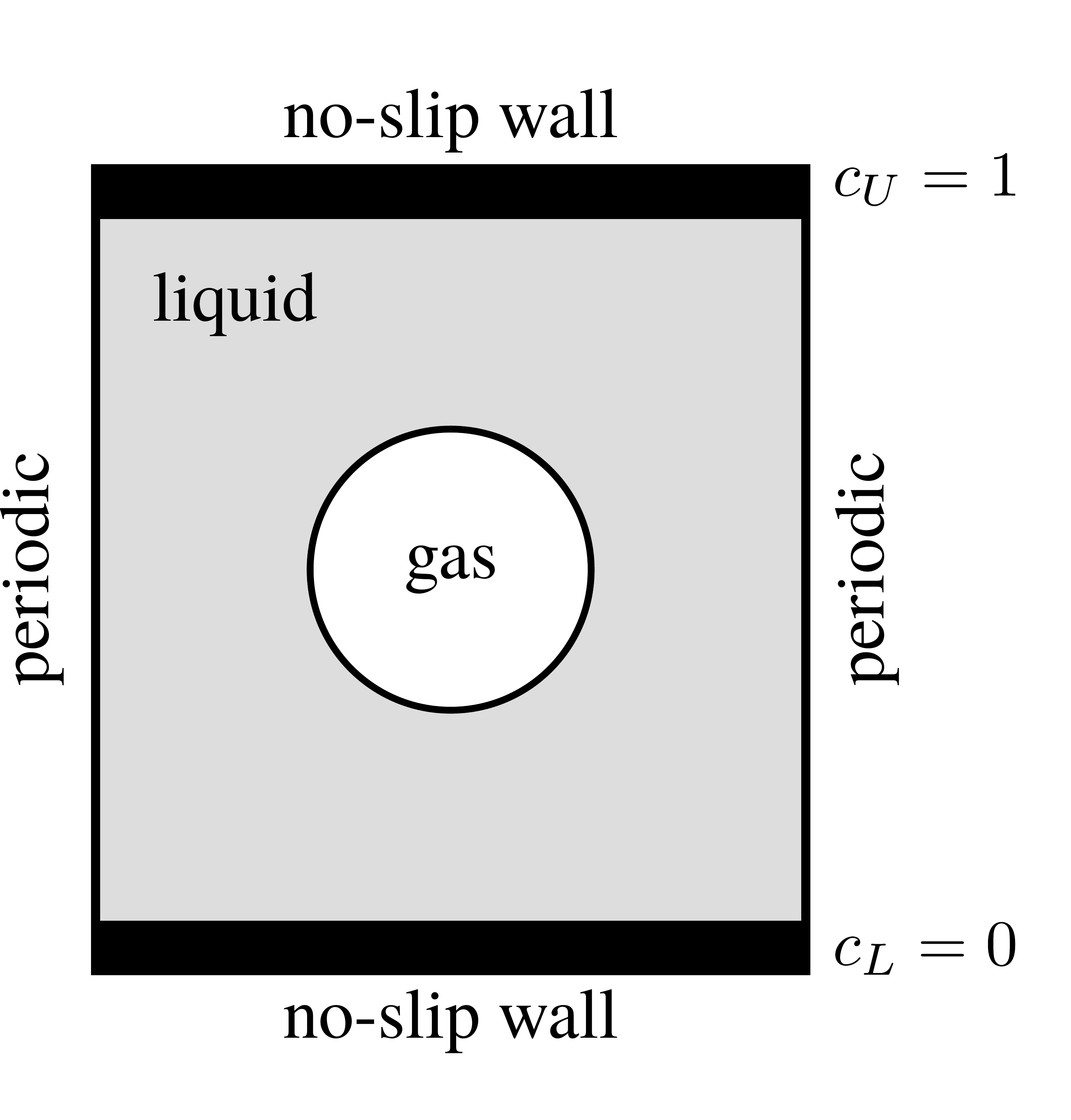}
    \caption{A schematic of the domain for the case of scalar diffusion around a bubble in a channel. Here, $c_U$ and $c_L$ represents the boundary conditions for the scalar concentration on the upper and lower walls, respectively.}
    \label{fig:scalar_around_bubble}
\end{figure}

Here, the scalar is confined to the liquid region. If the scalar quantity represents the temperature field, then this corresponds to the physical scenario of $Pr_l/Pr_g \rightarrow 0$, where $Pr$ is the Prandtl number, and the subscripts $l$ and $g$ represents liquid and gas fluid quantities. If the scalar instead represents dissolved salts or dye, then this corresponds to the scenario of $Sc_l/Sc_g \rightarrow 0$, where $Sc$ is the Schmidt number. In either case, the ratio of diffusivities is $D_g/D_l \rightarrow 0$, which essentially leads to the confinement of the scalar in the liquid region. 

In the current setup, the diffusivities are chosen to be $D_l=0.01$ and $D_g=0$. The scalar concentration is initially zero everywhere in the domain. The upper and lower walls have Dirichlet boundary conditions of $c_U=1$ and $c_L=0$, respectively. Without the presence of the bubble in the channel, the scalar concentration will remain uniform along the streamwise direction. The final steady state scalar concentration profile along the wall-normal direction can be obtained by solving a one-dimensional steady state heat equation, which would give a linear profile with a slope of $(c_U-c_L)/L=10$. When the bubble is present, it experiences the varying scalar concentration values around it which makes it a good test case to evaluate the proposed model. Moreover, this test case was chosen because the setup represents the concentration polarization region and the gas bubbles that are formed at the electrode in a electrochemical system. 

In the subsequent Sections \ref{sec:stat_bubble} and \ref{sec:mov_bubble}, two situations that correspond to (a) a stationary bubble, and (b) a moving bubble, have been studied. The importance of the proposed scalar-transport model in Eq. \ref{equ:interface} and the positivity criterion in Eq. \ref{equ:pecletcriterion} is also highlighted by comparing the results obtained using this model against the results obtained by solving the Eq. \ref{equ:c_mod}. Hereafter, we denote the solution obtained from Eq. \ref{equ:c_mod} as the ``previous approach".

\subsubsection{Stationary bubble\label{sec:stat_bubble}}

Here, the bubble is assumed to be stationary, and the fluid velocity is zero everywhere in the domain. This setup, therefore, requires solving only the scalar-transport equation. The time scale of diffusion of the scalar in the domain is $\tau_d\sim L^2/D = 1$, and we expect the scalar concentration to reach a steady state after this time. Therefore, the total time of the simulation is taken to be $t_{end}=2\tau_d=2$. 

%

\begin{figure}
    \centering
    \includegraphics[width=0.7\textwidth]{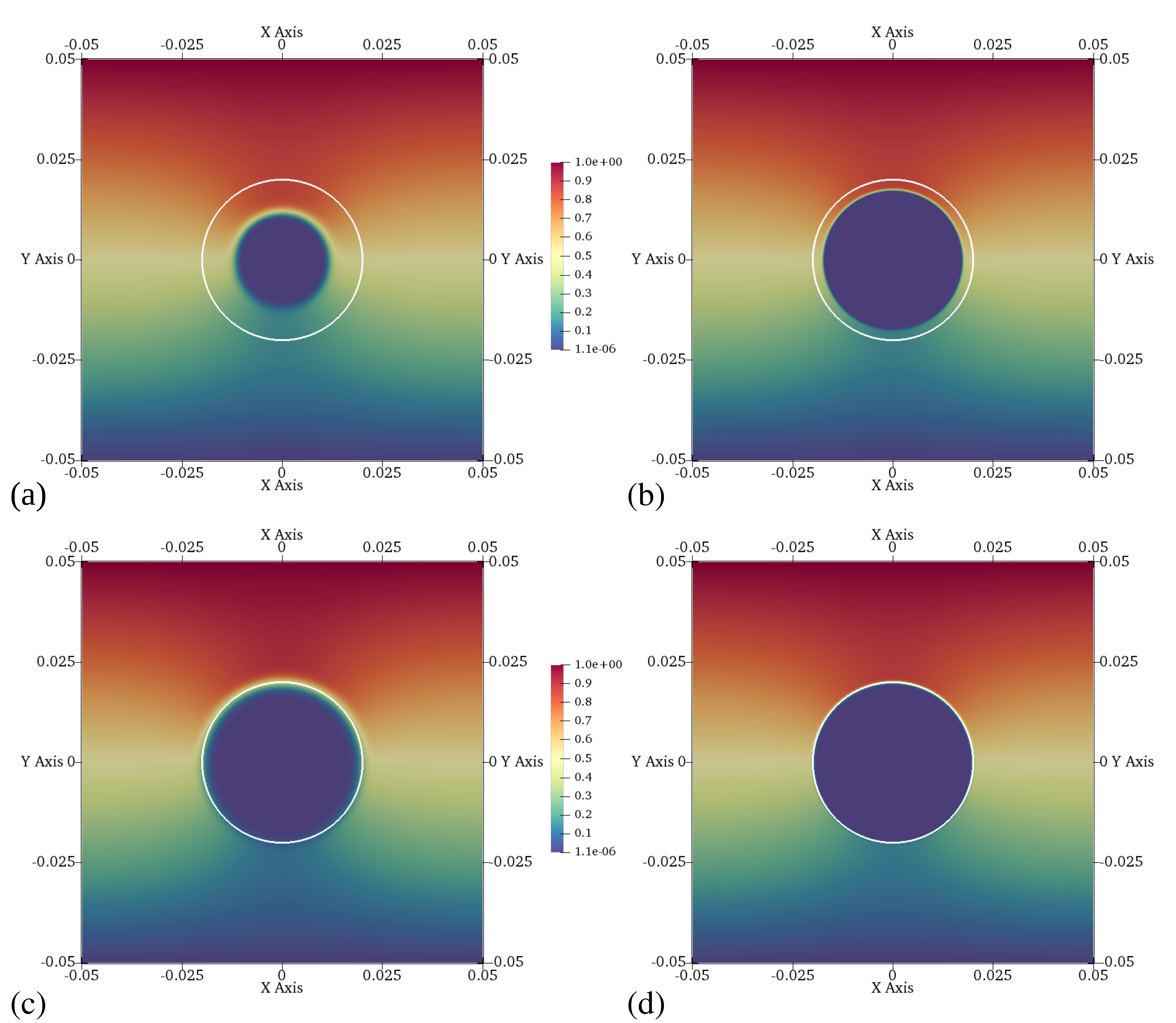}
    \caption{The final time steady state scalar concentration field obtained using, (a) the previous approach on a grid of size $128\times128$, (b) the previous approach on a grid of size $512\times512$, (c) the proposed model on a grid of size $128\times128$, and (d) the proposal model on a grid of size $512\times512$. The white solid line represents the interface; and the color plot represents the scalar concentration field.}
    \label{fig:stat_bubble}
\end{figure}

The final steady state scalar concentration fields obtained by solving the proposed model and from the previous approach are shown in Figure \ref{fig:stat_bubble}, on two different grids of sizes, $128\times128$ and $512\times512$. The previous approach results in the leakage of the scalar into the bubble for both grids; and increasing the grid resolution resulted in less leakage of the scalar. However, the results from the proposed model show no signs of leakage of the scalar into the bubble on either grids. The leakage of the scalar in the previous approach is more evident in Figure \ref{fig:stat_bubble_line}, where $c$ and $\phi$ are plotted along the wall-normal direction at $x=0$. 

\begin{figure}
    \centering
    \includegraphics[width=\textwidth]{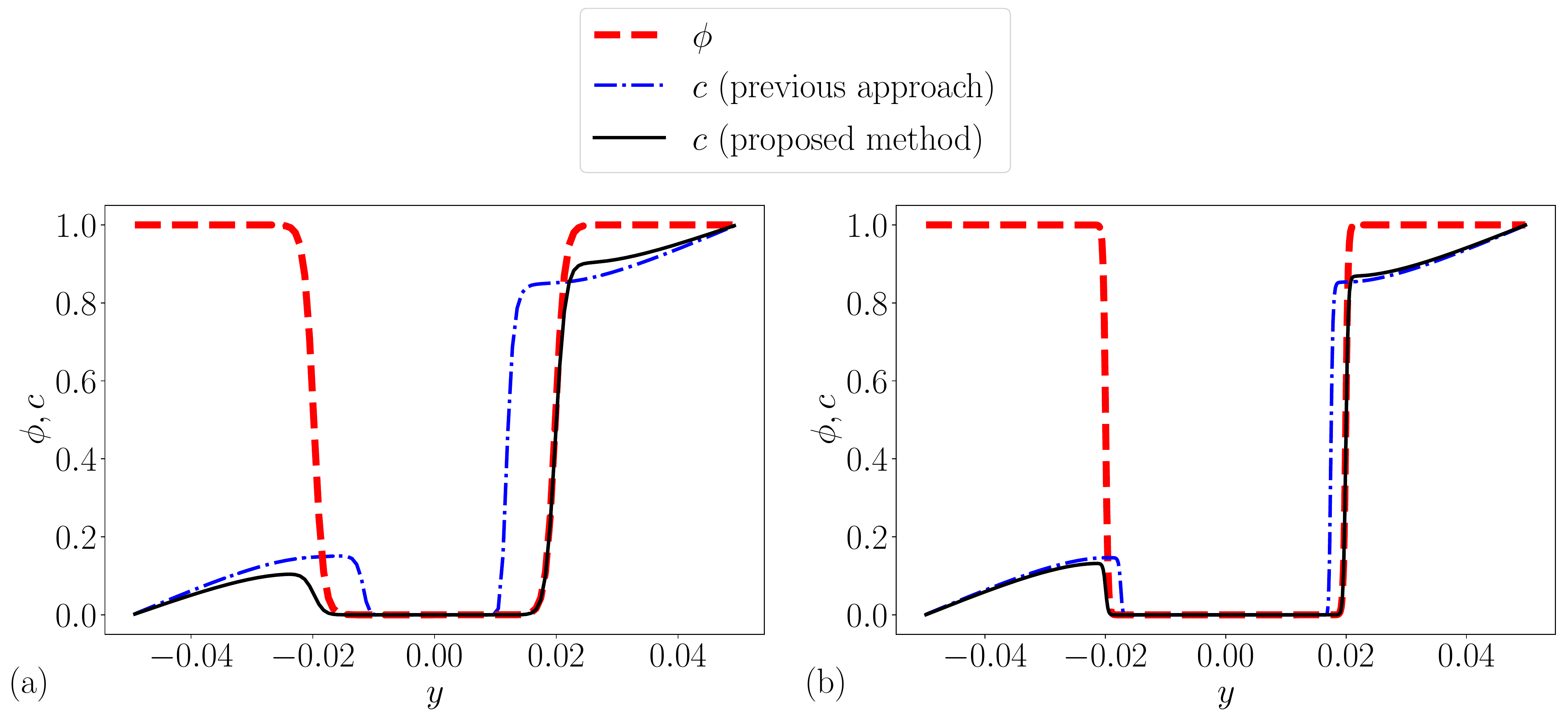}
    \caption{The scalar concentration field $c$ and the volume fraction field $\phi$ at the final time of $t=2$ along the $x=0$ line for the case of scalar diffusion around a stationary bubble in a channel, simulated on the grids of size (a) $128\times128$, and (b) $512\times512$.}
    \label{fig:stat_bubble_line}
\end{figure}

To further quantify the leakage seen in Figure \ref{fig:stat_bubble_line}, a quantity $c_e$ defined as 
\begin{equation}
c_e = \Bigg\{ 
    \begin{aligned}
        & |c-\phi| & \hspace{10mm}  \phi < 10^{-3} \\
        & 0 & \hspace{10mm} \mathrm{else},
    \end{aligned}
 \label{equ:leakage_error}   
\end{equation}
is computed along the wall-normal direction at $x=0$ and is plotted in Figure \ref{fig:stat_bubble_line_error}. Since $c_e$ is the difference between $c$ and $\phi$ within the bubble region, it approximately represents the effective amount of scalar that leaked into the bubble region. An exact method would result in $c_e$ being zero, which can be seen for the proposed method in Figure \ref{fig:stat_bubble_line_error}. Whereas, the previous approach results in the leakage of the large amounts of scalar into the bubble region.  
\begin{figure}
    \centering
    \includegraphics[width=\textwidth]{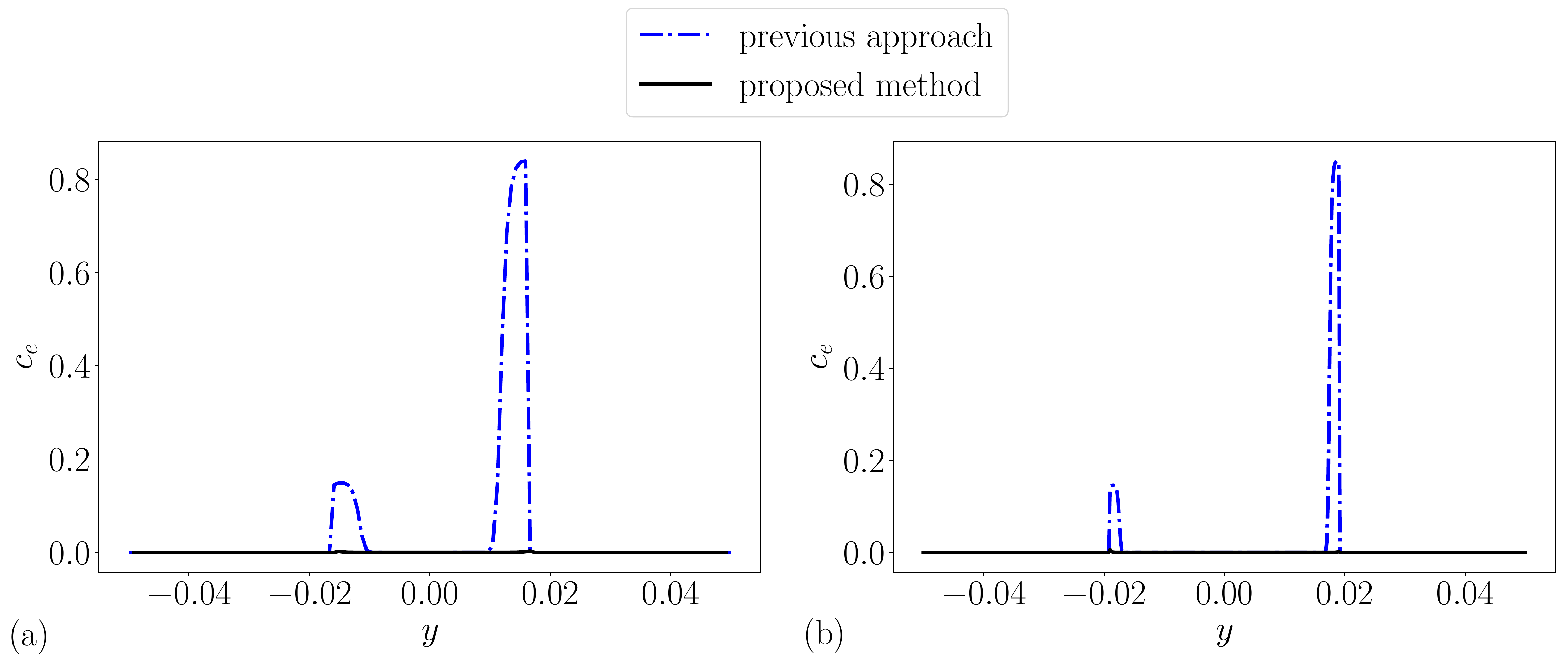}
    \caption{The scalar leakage error $c_e$ at the final time of $t=2$ along the $x=0$ line for the stationary-bubble case, simulated on the grids of size (a) $128\times128$, and (b) $512\times512$.}
    \label{fig:stat_bubble_line_error}
\end{figure}

Integrating $c_e$ along a line, an integrated leakage error metric $I_{error}$ can be defined as
\begin{equation}
    I_{error}=\int_s c_e\ dx dy.
    \label{equ:int_leakage_error}
\end{equation}
Values of $I_{error}$ obtained by integrating $c_e$ in Figure \ref{fig:stat_bubble_line_error} are listed in Table \ref{tab:stat_bubble_error}. The integrated leakage error with the proposed method is four orders of magnitude lower compared to the previous approach.

\begin{table}[]
\centering
\begin{tabular}{@{}|c|c|c|@{}}
\toprule
\textbf{$I_{error}$} & $128\times128$ & $512\times512$ \\ \midrule
previous approach & $5.96\times10^{-3}$ & $6.31\times10^{-3}$ \\
proposed method & $8.78\times10^{-7}$ & $2.34\times10^{-7}$ \\ \bottomrule
\end{tabular}
\caption{Integrated leakage error for the stationary-bubble case.}
\label{tab:stat_bubble_error}
\end{table}

\subsubsection{Moving bubble\label{sec:mov_bubble}}

Here, in this section, two situations of moving bubble in a channel are considered. In the first case, the bubble is assumed to be translating at a fixed velocity without undergoing deformation. This setup, therefore, requires solving the scalar-transport equation and the phase-field equation. Later, in the second case, the bubble and the surrounding liquid is initialized with the same velocity, but in this case the bubble is allowed to deform in an accelerating flow. Therefore, this requires solving the proposed model coupled with the hydrodynamics.

\begin{figure}
    \centering
    \includegraphics[width=0.8\textwidth]{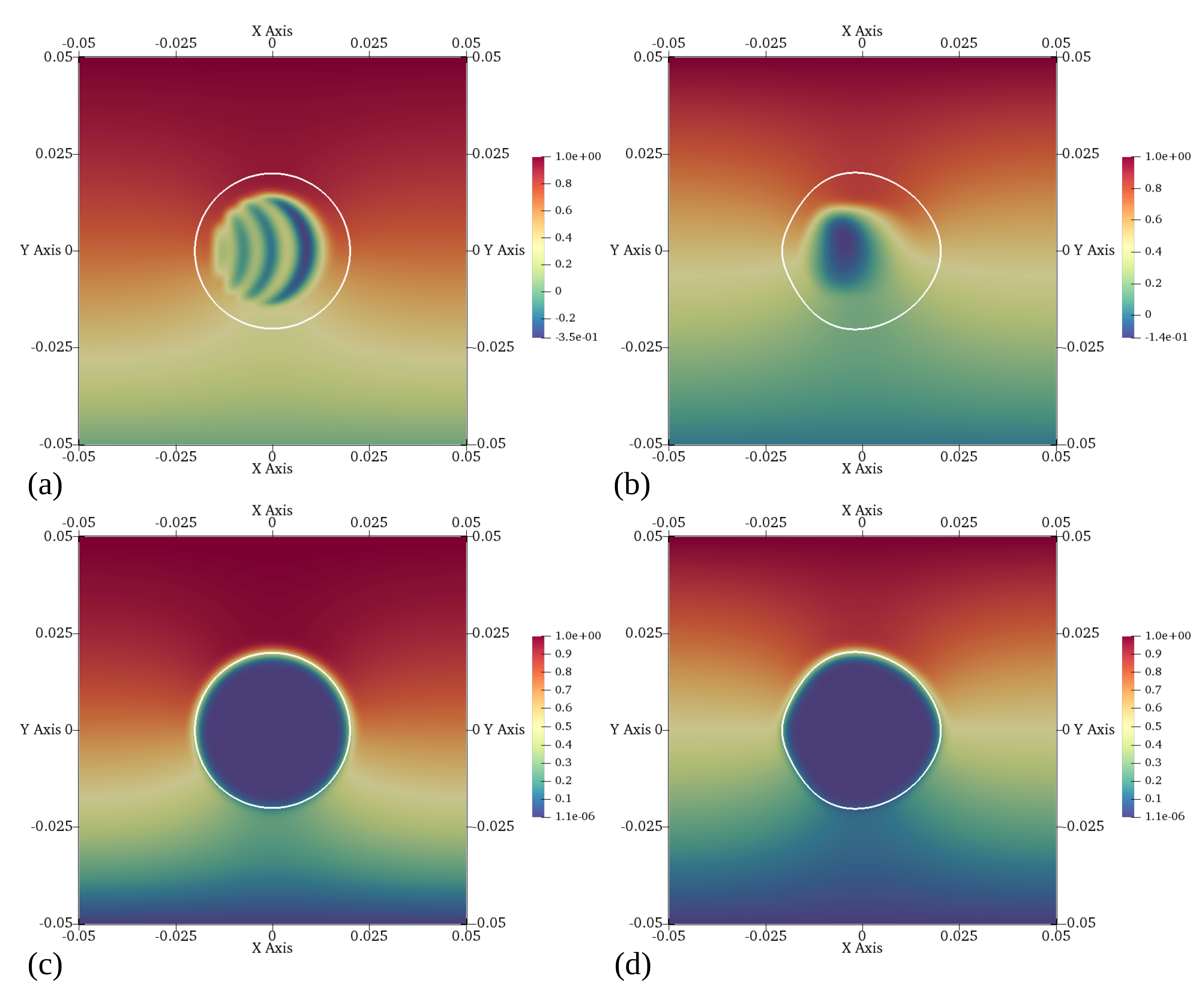}
    \caption{The scalar concentration field for the previous approach at (a) $t=0.1$, (b) $t=2$, and for the proposed model at (c) $t=0.1$, (d) $t=2$. The white solid line represents the interface; and the color plot represents the scalar concentration field.}
    \label{fig:moving_bubble}
\end{figure}

For the first case of a translating bubble, the fluid velocity is $\vec{u}=1$ everywhere in the domain. This case is more challenging compared to the stationary bubble case in Section \ref{sec:stat_bubble}, because of the non-zero droplet-Peclet number ($Pe = ud/D = 4$) of this flow. The scalar concentration fields obtained by solving the proposed model and from the previous approach are shown in Figure \ref{fig:moving_bubble} at two different times, $t=0.1$, and $t=2$, on a grid of size $128\times128$. The results from the previous approach not only show the leakage of scalar into the bubble, but also exhibit the characteristic dispersion errors associated with a non-dissipative scheme, at early times ($t=0.1$). At the final time of $t=2$, the scalar almost completely leaks into the bubble. The dispersion errors generate negative values of the scalar concentration and results in realizability issues of the scalar concentration field. The negative values in the results from the previous approach is evident in Figure \ref{fig:moving_bubble_line}, where $c$ and $\phi$ are plotted along the wall-normal direction at $x=0$. Unlike the results from the previous approach, the proposed model neither show signs of leakage of the scalar into the bubble, nor the dispersion errors. 

\begin{figure}
    \centering
    \includegraphics[width=\textwidth]{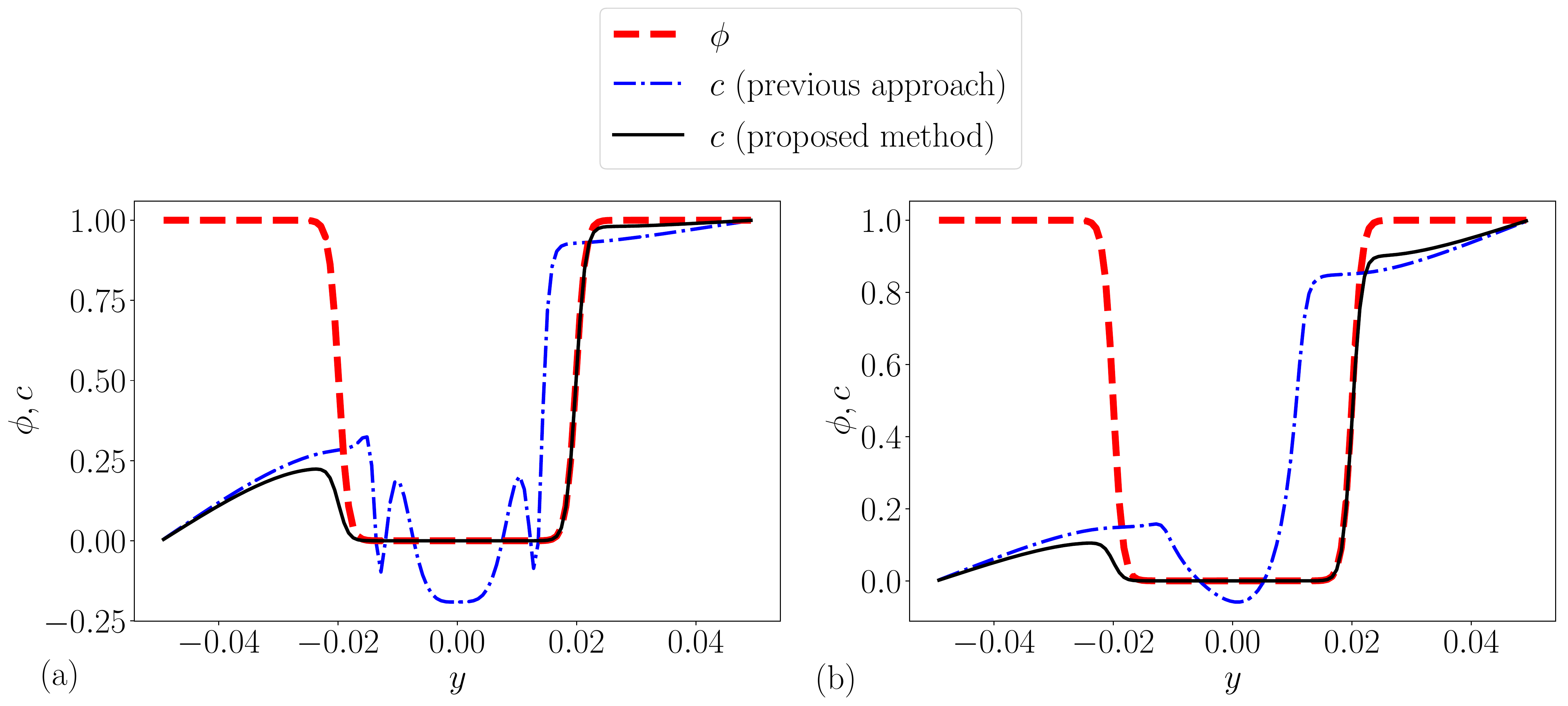}
    \caption{The scalar concentration field $c$ and the volume fraction field $\phi$ along the $x=0$ line for the case of scalar diffusion around a moving bubble in a channel, simulated on a grid of size $128\times128$, at (a) $t=0.1$, and (b) $t=2$.}
    \label{fig:moving_bubble_line}
\end{figure}

The scalar leakage error $c_e$ defined in Eq. \eqref{equ:leakage_error} and computed along the wall-normal direction at $x=0$ is plotted in Figure \ref{fig:moving_bubble_line_error}. The non-zero values of $c_e$ for the previous approach imply a significant leakage of the scalar into the bubble region. The integrated leakage error defined in Eq. \eqref{equ:int_leakage_error} and evaluated along the wall-normal direction at $x=0$ is listed in Table \ref{tab:moving_bubble_error}. Similar to the stationary-bubble case, the integrated leakage error with the proposed method is significantly lower compared to the previous approach. 

\begin{figure}
    \centering
    \includegraphics[width=\textwidth]{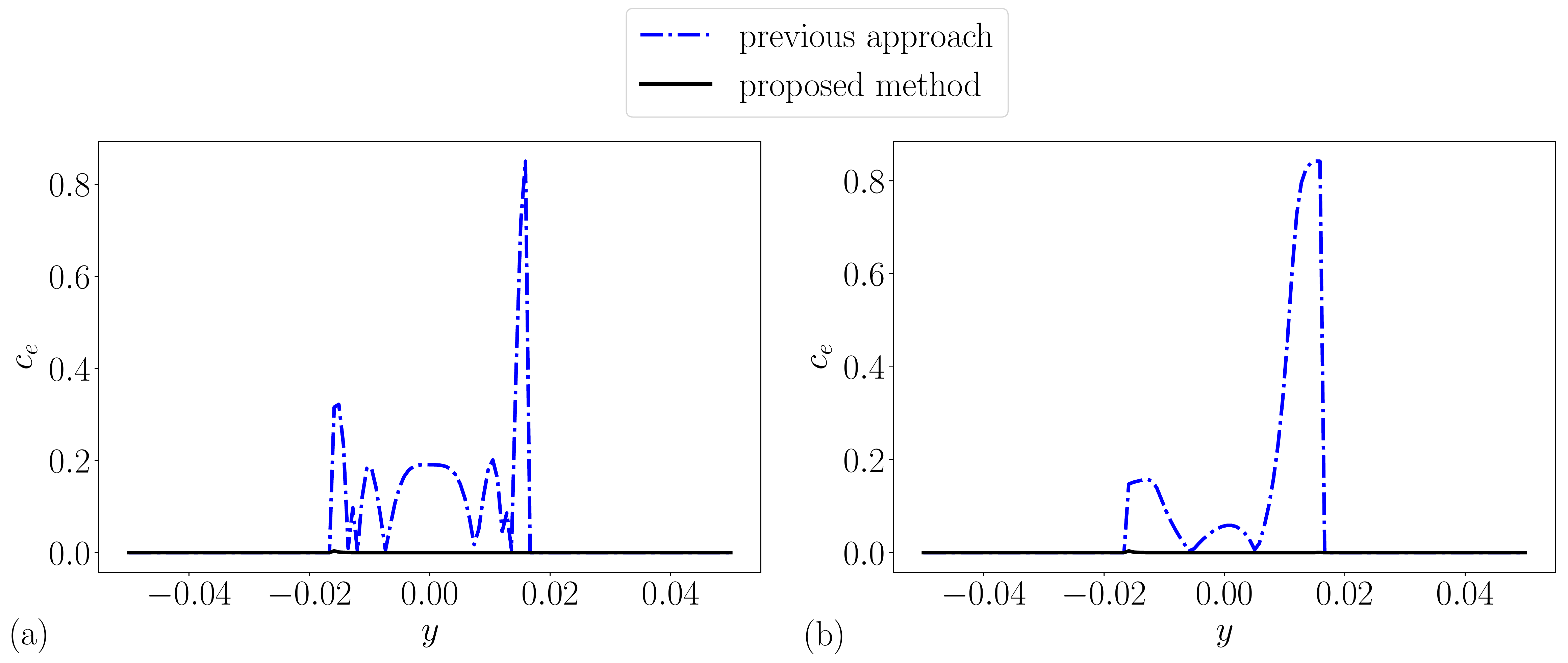}
    \caption{The scalar leakage error $c_e$ along the $x=0$ line for the moving-bubble case, simulated on a grid of size $128\times128$, at (a) $t=0.1$, and (b) $t=2$.}
    \label{fig:moving_bubble_line_error}
\end{figure}

\begin{table}[]
\centering
\begin{tabular}{@{}|c|c|c|@{}}
\toprule
\textbf{$I_{error}$} & $t=0.1$ & $t=2$ \\ \midrule
previous approach & $3.54\times10^{-2}$ & $3.79\times10^{-2}$ \\
proposed method & $6.27\times10^{-7}$ & $6.28\times10^{-7}$ \\ \bottomrule
\end{tabular}
\caption{Integrated leakage error for the moving-bubble case.}
\label{tab:moving_bubble_error}
\end{table}

In the second case, for a deforming bubble, the initial fluid velocity is taken to be $\vec{u}=1$ everywhere in the domain. The flow is sustained with a unit acceleration, $g=1$ along the streamwise direction. The density of the gas and the liquid are $\rho_g=1$ and $\rho_l=10$, respectively. The viscosity of the gas and the liquid are $\mu_g=1.81\times10^{-5}$ and $\mu_l=8.9\times10^{-4}$, respectively. This corresponds to a liquid Reynolds number of about $Re_l=\rho_l u d/\mu_l=450$. The surface tension is varied from $\sigma=0$ to $\sigma=5$, to study the effect of bubble breakup on the evolution of the scalar concentration field. This corresponds to the E{\"o}tv{\"o}s numbers of $Eo=\infty$ to $Eo=3.2\times10^{-3}$, respectively, where the E{\"o}tv{\"o}s number is defined as $Eo=\Delta \rho g d^2/\sigma$. 

\begin{figure}
    \centering
    \includegraphics[width=\textwidth]{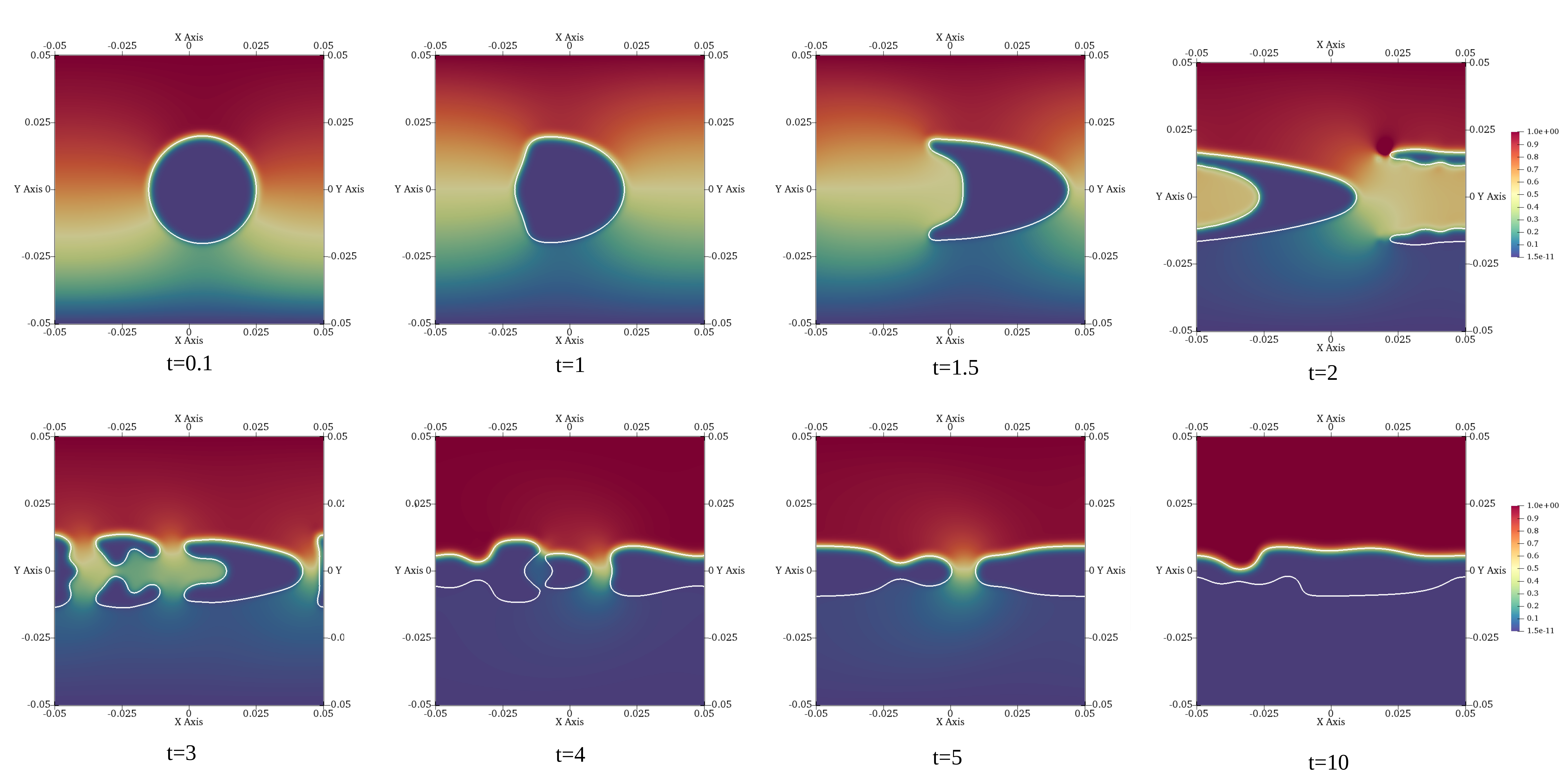}
    \caption{The evolution of the scalar concentration field around a deforming bubble in a channel for a surface tension value of $\sigma=0$. The white solid line represents the interface; and the color plot represents the scalar concentration field.}
    \label{fig:sigma0}
\end{figure}

Figure \ref{fig:sigma0} shows the evolution of the bubble shape and the scalar concentration field at various times up to $t=10$ for the surface tension value $\sigma=0$ ($Eo=\infty$), simulated on a grid of size $128\times128$. The bubble deforms due to the relative difference in the body force between the bubble and the surrounding liquid, which is effectively a buoyancy force, and develops skirt on the two ends as can be seen at $t=1.5$ and $t=2$. \citet{clift2005bubbles} characterized the shapes of the bubble on a $Re-Eo$ plot, and the $Re=450$ and $Eo=\infty$ corresponds to a skirted spherical-cap regime, which matches well with our observation, although the current setup is a two dimensional one. Later, the bubble breaks up leaving behind a trail of small bubbles, which eventually fills the channel as can be seen at $t=10$. With an increase in the surface tension values (decrease in $Eo$) the deformation of the bubble is reduced. Figure \ref{fig:sigma0_0005} shows the time evolution of bubble shape and the scalar concentration for the surface tension value of $\sigma=5\times10^{-4}$ ($Eo=32$). 

\begin{figure}
    \centering
    \includegraphics[width=\textwidth]{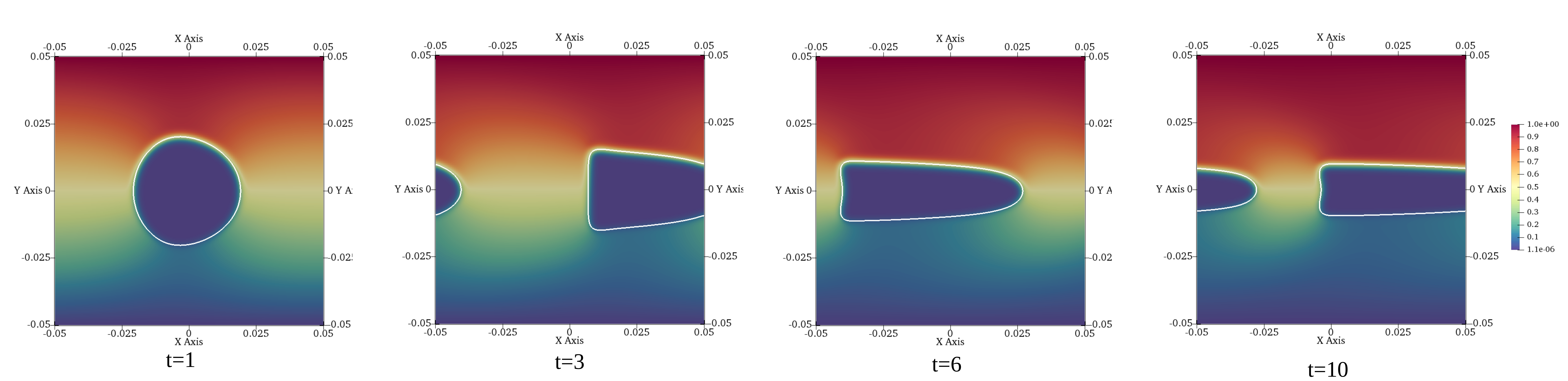}
    \caption{The evolution of the scalar concentration field around a deforming bubble in a channel for a surface tension value of $\sigma=5\times10^{-4}$. The white solid line represents the interface; and the color plot represents the scalar concentration field.}
    \label{fig:sigma0_0005}
\end{figure}

One consequence of the bubble deformation and breakup is that the flux of the scalar across the channel in the wall-normal direction is reduced because of the reduced cross-sectional area. This can be quantified by calculating the flux of the scalar at the wall defined as $f_w=D\lvert \vec{\nabla} c \rvert$. The average flux of the scalar, $\langle f_w \rangle$, is plotted in Figure \ref{fig:flux} versus time for various values of surface tension, where the average is computed over top and bottom walls. For the case of $\sigma=0$, it was seen in Figure \ref{fig:sigma0} that the bubble would eventually span the whole channel blocking the flow of scalar. As a result, the flux of the scalar can be seen to drop to zero for time $t\gtrapprox8$. With an increase in the surface tension values, the deformation of the bubble decreases, and therefore the flux of the scalar increases. At the surface tension value of $\sigma=5$, the bubble essentially remains circular throughout simulation; and the average scalar flux reaches a limiting value of $\langle f_w \rangle = 0.075445$ that was obtained for the stationary bubble case in Section \ref{sec:stat_bubble}.  

\begin{figure}
    \centering
    \includegraphics[width=0.8\textwidth]{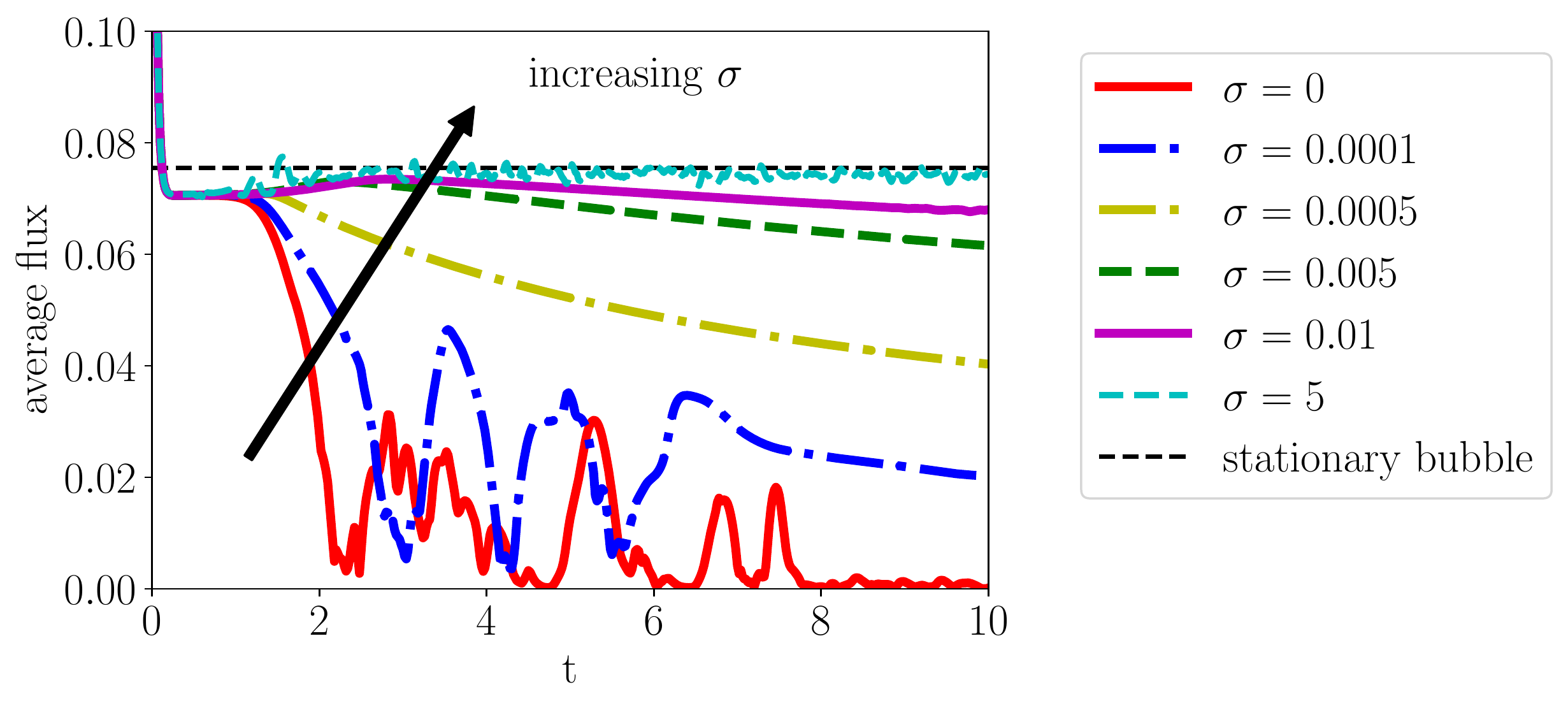}
    \caption{The average scalar flux versus time for various values of surface tension. The thin black dashed line represents the average flux for the stationary bubble case in Section \ref{sec:stat_bubble}.}
    \label{fig:flux}
\end{figure}

\subsection{Charged ions in a drop \label{sec:ion_drop}}

In this section, the simulation of reorganization of ions with unbalanced charge within a drop will be presented. This case illustrates the applicability of the proposed scalar-transport model in Eq. \ref{equ:interface} for modeling electrokinetics in two-phase flows. To model this phenomenon, the scalar-transport equation is recast into the Nernst-Planck equation
\begin{equation}
    \frac{\partial c^{\pm}}{\partial t} + \vec{\nabla}\cdot(\vec{u} c^{\pm} \pm \phi \lambda \vec{E} c^{\pm}) = \vec{\nabla} \cdot \left[D^{\pm} \left\{\vec{\nabla}c^{\pm} - \frac{(1 - \phi) \vec{n} c^{\pm}}{\epsilon} \right\}\right],
    \label{eq:nernstplanck}
\end{equation}
where $c^+$ and $c^-$ represent the cationic and anionic concentration fields, respectively; $D^+$ and $D^-$ represent the cationic and anionic diffusion coefficients, respectively; $\lambda$ is the electrical mobility and $\vec{E}$ is the electric field. Here, the relative velocity $\vec{u}_r$ is replaced by the effective electromigration velocity experienced by the ions, and is given by $\vec{u}_r=\lambda \vec{E}$. To close the system of equations, the Nernst-Planck equation is combined with the Gauss's law
\begin{equation}
    \vec{\nabla}\cdot(\varepsilon \vec{E}) = \rho^f,
    \label{eq:gauss}
\end{equation}
where $\rho^f=z^+ e c^+ - z^- e c^-$ is the free charge density; $z^+$ and $z^-$ represent the cationic and anionic valences, respectively; $e$ is the elementary charge; and $\varepsilon$ is the electrical permittivity of the electrolyte. Making an electrostatic approximation, i.e., assuming that the time variation of the magnetic field is much slower compared to that of the electric field, the electric field can be shown to be irrotational using Faraday's law \citep{griffiths2005introduction}. Therefore, the electric field can be written as
\begin{equation}
    \vec{E}=-\vec{\nabla} \psi,
    \label{eq:electrostatic}
\end{equation}
where $\psi$ is the electrostatic potential. Now, invoking the Einstein-Smoluchowski relation 
\begin{equation}
    D^{\pm} = \frac{\lambda k_B T}{z^{\pm} e},
\end{equation}
where $k_B$ is the Boltzmann constant, and by re-scaling the electrostatic potential $\psi$, with the thermal voltage $V_T$, as 
\begin{equation}
    \tilde{\psi} = \frac{\psi}{V_T},
\end{equation}
where $V_T = k_B T/(z^{\pm}e)$, and $T$ is the absolute temperature, the Nernst-Planck equation in Eq. \ref{eq:nernstplanck} can be rewritten in the form 
\begin{equation}
     \frac{\partial c^{\pm}}{\partial t} + \vec{\nabla}\cdot\big\{\vec{u} c^{\pm} \mp \phi D^{\pm} (\vec{\nabla}\tilde{\psi}) c^{\pm}\big\} = \vec{\nabla} \cdot \left[D^{\pm} \left\{\vec{\nabla}c^{\pm} - \frac{(1 - \phi) \vec{n} c^{\pm}}{\epsilon} \right\}\right],
\end{equation}
which will be used in the current setup. Electrostatic phenomena and hydrodynamics can be coupled through the Maxwell stress tensor either as a body force or as a divergence of the tensor \citep{saville1997electrohydrodynamics, berry2013multiphase}, and will be part of a separate study and is beyond the scope of this work.

In this study, the setup consists of a stationary circular liquid drop of radius $R=0.2$, in a quiescent surrounding medium of another fluid. The density of both fluids are chosen to be $\rho_d=\rho_s=1$, where the subscripts $d$ and $s$ represents the drop and surrounding fluid properties. The viscosity of both fluids are chosen to be $\mu_d=\mu_s=10^{-3}$. The ion concentrations are initially uniform within the drop and are normalized by the reference value $c_{ref}=10^9$ as $\tilde{c}^{\pm}=c^{\pm}/c_{ref}$. The normalized initial ion concentrations are chosen to be $\tilde{c}^+=1$, and $\tilde{c}^-=0$. The diffusivity of the ions are chosen to be $D^{\pm}_d=0.01$ and $D^{\pm}_s=0$. 

\begin{figure}
    \centering
    \includegraphics[width=\textwidth]{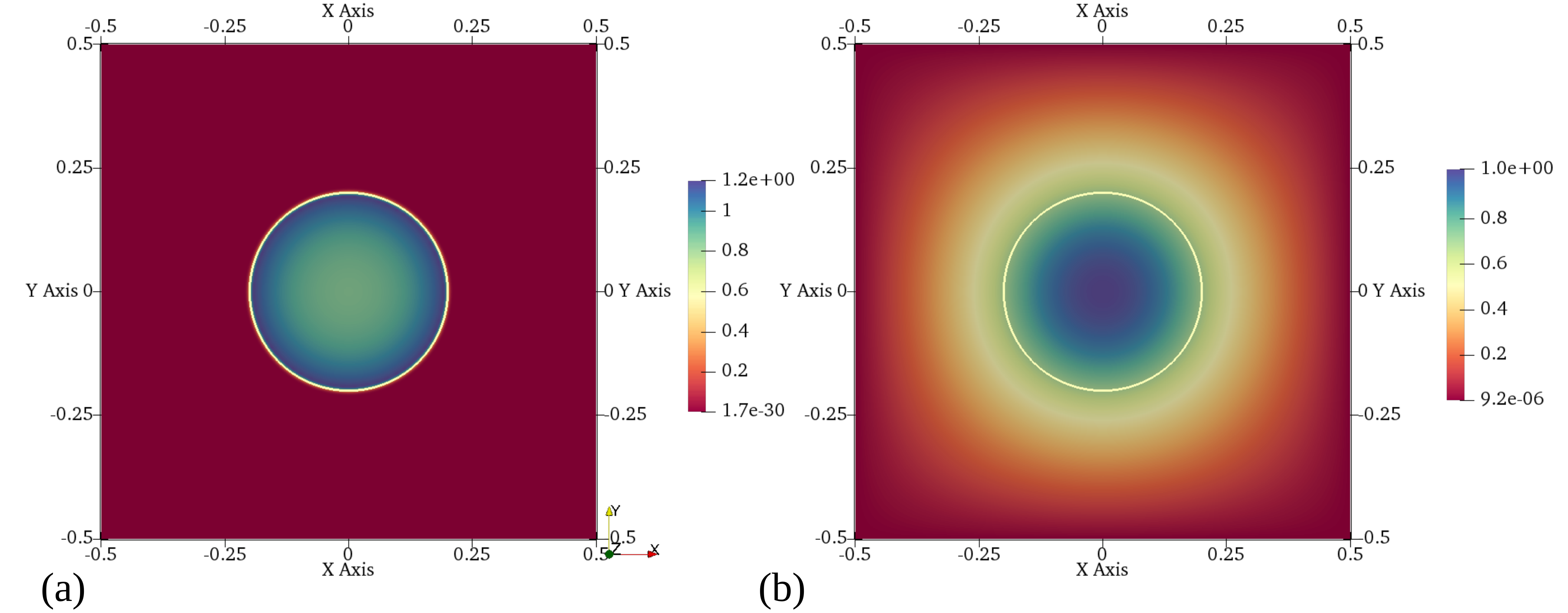}
    \caption{(a) The normalized ion concentration field, and (b) the re-scaled electrostatic potential field for the case of charged ions in a drop at time $t=1$, simulated on a grid of size $512\times512$.}
    \label{fig:charge_reorg}
\end{figure}

Figure \ref{fig:charge_reorg} shows the ion concentration field $\tilde{c}^+$ and the resulting electrostatic potential field $\tilde{\psi}$ at time $t=1$, simulated on a grid of size $512\times512$. The ions in the drop experiences the Columbic repulsive force because of the unbalanced net positive charge. This force drives the ions apart, towards the interface. This is counteracted by the diffusion in the opposite direction. The ions, however, cannot cross the interface because of the non-conducting surrounding medium; and therefore, they arrive at a state where the electromigration is balanced by the diffusion. Figure \ref{fig:charge_reorg_line} shows the ion concentration $\tilde{c}^+$ and the resulting electrostatic potential $\tilde{\psi}$ at time $t=1$ along the line $x=0$ in the domain. The results of the simulation from four different grid sizes were chosen to show the convergence of the results.   


\begin{figure}
    \centering
    \includegraphics[width=\textwidth]{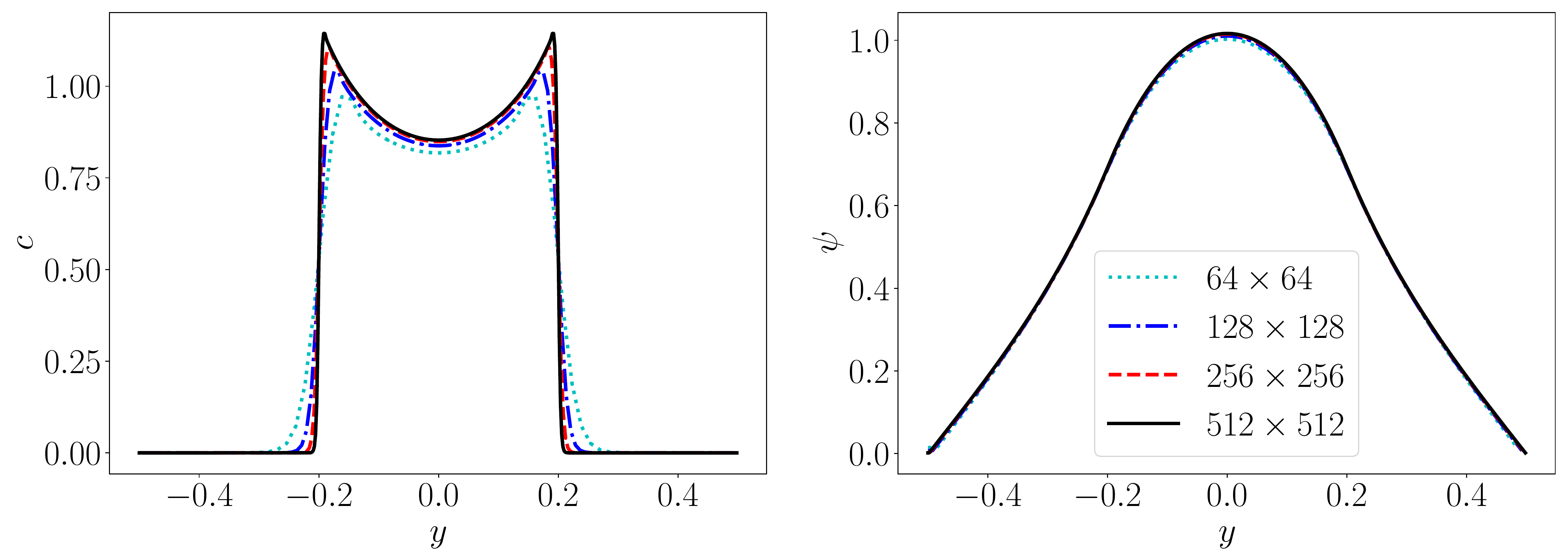}
    \caption{(a) The normalized ion concentration $\tilde{c}^+$, and (b) the re-scaled electrostatic potential $\tilde{\psi}$ along the $x=0$ line for the case of charged ions in a drop at time $t=1$, simulated on grids of different sizes.}
    \label{fig:charge_reorg_line}
\end{figure}

\subsection{Droplet-laden turbulent channel flow \label{sec:scalar_turbulent}}

Up until here in Section \ref{sec:results}, simple tests cases in 1D and 2D were presented. Here, a direct numerical simulation of a droplet-laden turbulent channel flow with passive scalar quantity that is confined to the surrounding fluid (carrier fluid) is presented. This test case illustrates: the validity of the positivity criterion, the applicability and robustness of the proposed scalar-transport model, and the prevention of unphysical numerical leakage by the model for complex two-phase turbulent flow regimes.  

The setup consists of a three-dimensional channel of size $8H\times2H\times4H$ in $x,y,z$ as shown in Figure \ref{fig:channel_domain}, where $H=1$ is the channel half height. The streamwise direction (along the $x$ coordinate) and the spanwise direction (along the $z$ coordinate) have periodic boundary conditions, and the wall-normal direction (along the $y$ coordinate) has no-slip walls. 

\begin{figure}
    \centering
    \includegraphics[width=0.7\textwidth]{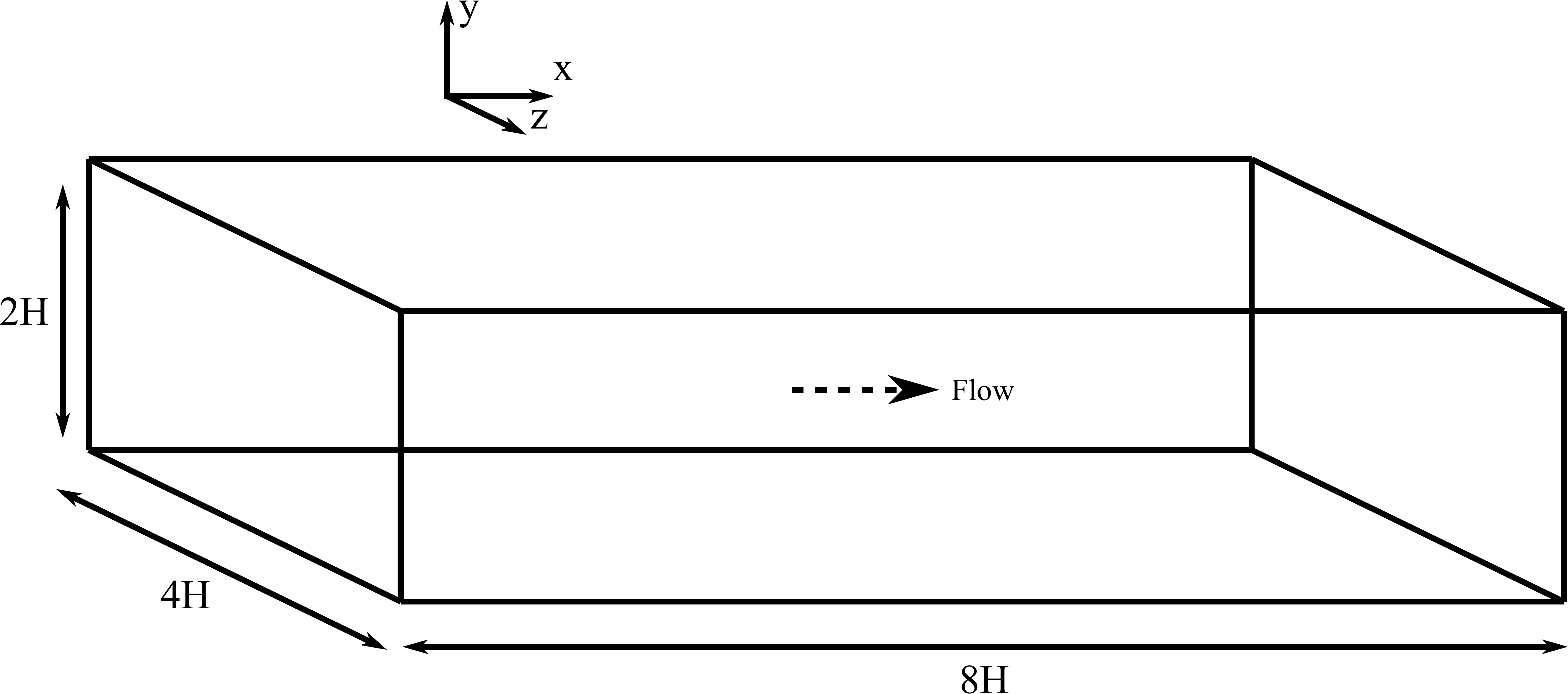}
    \caption{A schematic of the domain for the case of droplet-laden channel flow.}
    \label{fig:channel_domain}
\end{figure}

A unity density ratio $\rho*=\rho_d/\rho_s=1$ and a unity viscosity ratio $\mu^*=\mu_d/\mu_s=1$ are considered in this case such that the kinematic viscosities $\nu=\mu_d/\rho_d=\mu_s/\rho_s = 0.01$ are same for the droplet fluid and the surrounding fluid. The computation is performed for a Reynolds number of $Re=UH/\nu = 3000$, defined based on the mean centerline velocity $U$. This corresponds to a shear Reynolds number of $Re_{\tau}=u_{\tau}H/\nu = 150$, which is similar to the droplet-laden turbulent flow study by \citet{scarbolo2015coalescence}. 

The domain is discretized into a uniform grid of size $256\times64\times128$, which results in a grid spacing of size $\Delta y^w=4.6875$ in wall units, where the superscript $w$ represents a non-dimensional quantity scaled by the wall variables; e.g. $\Delta y^w=(\Delta y) u_{\tau}/\nu$, where $\nu$ is the kinematic viscosity, $u_{\tau}=\sqrt{\tau_w/\rho}$ is the shear velocity, and $\tau_w$ is the wall shear stress. No subgrid-scale model is used in the computation, since the grid resolution is considered to be sufficiently fine to resolve the essential turbulent scales \citep{kim1987turbulence}. 

A single-phase channel flow with surrounding fluid properties is simulated until it reaches a statistically stationary state. Then, droplets of diameter $d=0.4$ are added to the channel at a distance of $y^+=54$ from the walls, and are placed in the flow such that their centers are equally spaced in the $x-z$ plane. A total of $100$ droplets are added and the resulting volume fraction is $0.054$. A passive scalar quantity of diffusivities $D_s=0.5$ and $D_d=0$, in the surrounding fluid and the droplet fluid, respectively, is added to the surrounding fluid with a uniform initial concentration of $1$ at the same time when the droplets are introduced to the flow. The scalar is confined to the surrounding fluid region since the ratio of diffusivities in two fluids is $D_d/D_s=0$. The corresponding Schmidt numbers are $Sc_s=0.02$ and $Sc_d=\infty$, in the surrounding fluid and the droplet fluid, respectively. Since the $Sc_s$ is less than $1$, we expect that the scalar microscale, $\eta_s$, to be larger than the Kolmogorov scale, $\eta$, in the flow and it scales as $\eta_s\sim (D_s/\nu)^{3/4}\eta$ \citep{davidson2015turbulence}. Therefore, the grid requirement for the surrounding flow would be sufficient to resolve the smallest scales of the scalar. However, if $Sc_s$ was larger than $1$, then a more refined grid for the scalar needs to be used to resolve the Batchelor scale \citep[see,][]{schwertfirm2007dns}.  

\begin{figure}
    \centering
    \includegraphics[width=0.9\textwidth]{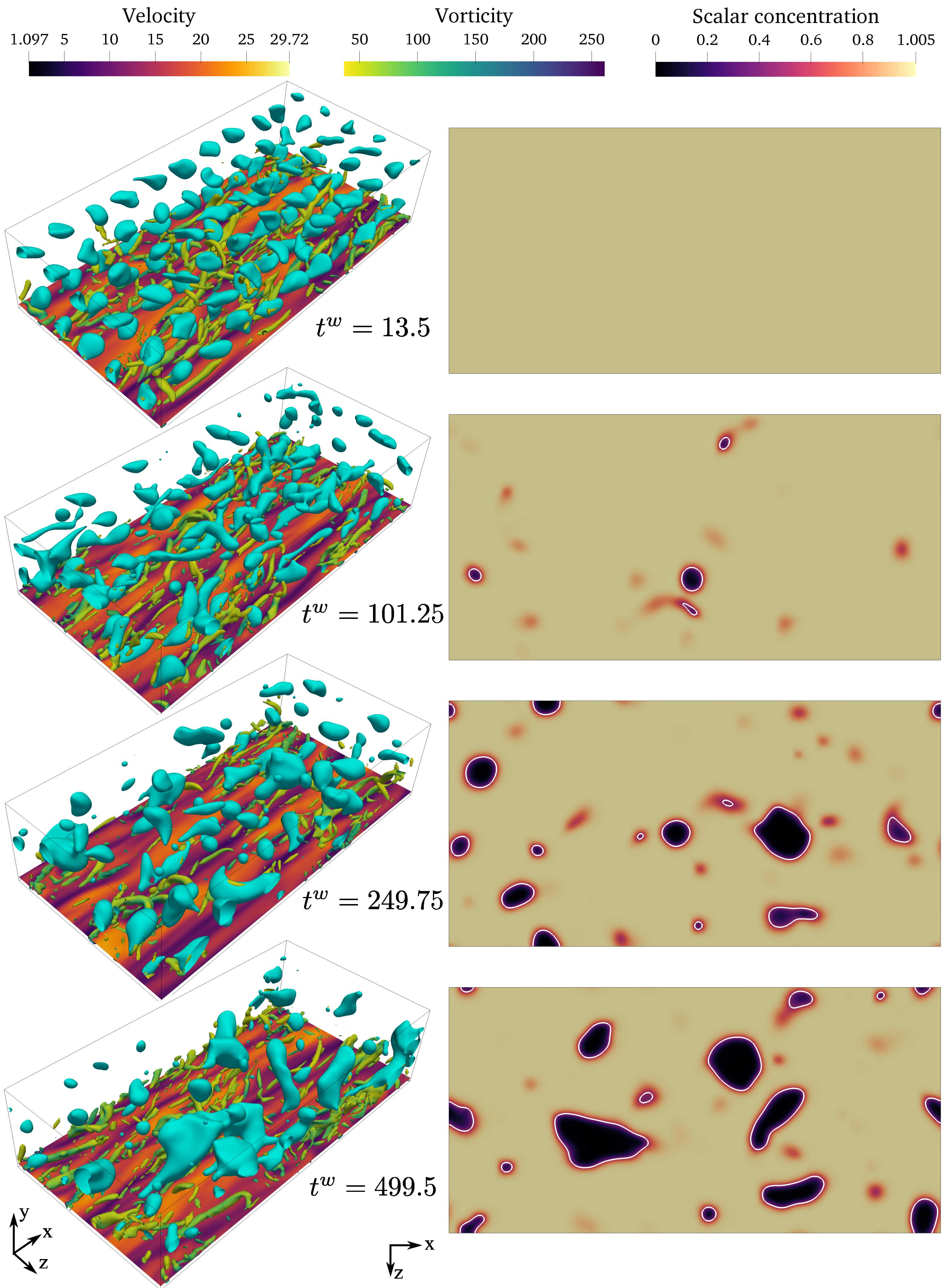}
    \caption{The snapshots of the droplet-laden turbulent channel flow at various times. The left column shows the three-dimensional view of the droplets and the flow structures in the channel. The blue colored surfaces represent the droplet interface. The flow structures are shown on the bottom half of the channel by plotting the isosurfaces of the second-invariant of the velocity gradient tensor, $Q=500$, colored by the local vorticity magnitude in the flow. The velocity magnitude is also plotted on a $x-z$ plane at a distance of $y=0.2H$ from the bottom wall. The right column shows the scalar concentration field on a $x-z$ cross-section midplane in the channel. The solid white lines represent the droplet interface.}
    \label{fig:turb_channel}
\end{figure}

The surface tension between the droplet fluid and the surrounding fluid is chosen such that the droplet-shear Weber number is $We_{\tau,d}=\rho u_{\tau}^2 d/\sigma=1$. Since the droplets are of size $d^w=96$ in wall units, they are larger than the Kolmogorov scale of $\eta^w\approx2$ wall units; and they undergo breakup and coalescence. The snapshots of the flow are shown in Figure \ref{fig:turb_channel} at various times. It appears from the snapshots that the droplet size is increasing, and therefore the number of droplets is decreasing with time, which is consistent with the observation of \citet{scarbolo2015coalescence}. For more details on the dynamics of the breakup and coalescence of the droplets, refer to the study by \citet{scarbolo2015coalescence}. Figure \ref{fig:turb_channel} also shows the scalar concentration field along the $x-z$ cross-section center plane of the channel at various times. Since the droplets were seeded at a distance of $y^+=54$ from the walls, the cross-section plane do not cut any of the droplets at the early time of $t^w=13.5$, therefore no droplets can be seen at this time. However, at later times, the droplets undergo coalescence and breakup and therefore can be seen in this view. Evidently, there was no leakage of the scalar quantity into the droplets or violation of the positivity of the scalar throughout the simulation. This illustrates the robustness of the proposed method and the validity of the positivity criterion. 

In this case, the maximum cell-Peclet number of the simulation is around $Pe_c=1.875$; and the positivity of the scalar is maintained throughout the simulation thought it violates the Positivity criterion by a small amount. This is because the Positivity criterion is a sufficient condition, and therefore cell-Peclet number values higher than $1$ might also result in maintaining positivity for the scalar quantity, as already illustrated in Section \ref{sec:nonzerorelvel} for a 1D case. However, repeating the droplet-laden channel flow simulation with a coarser grid (half the size of the original grid size in every direction) with a maximum cell-Peclet number of $Pe_c=3.75$, resulted in violation of the positivity of the scalar. Therefore, the grid size needs to selected such that the cell-Peclet number is maintained close to or less than $1$.

\section{Conclusions\label{sec:conclusion}} 

In this work, we proposed a novel transport model for the simulation of scalars in two-phase flows. The scalars are usually confined to one of the phases in a two-phase flow due to its disparate values of diffusivity and mobility in the two phases; and this typically poses a challenge for any numerical method in resolving the gradient of the scalar at the material interface.  

We therefore developed and verified a general scalar-transport model for two-phase flows, particularly for interfaces modeled using a phase-field (diffuse-interface) method. We showed that our newly proposed model equation prevents the artificial numerical diffusion (unphysical leakage) of the scalar from one phase to the other, while maintaining the positive values for the scalar concentration field throughout the simulation, albeit the use of central-difference schemes for the discretization of the operators. The use of central-difference scheme is to achieve a non-dissipative implementation that is crucial for the simulation of turbulent flows. 

We proved that the model maintains the positivity of the scalar concentration values\textemdash  a crucial realizability requirement for the simulation of scalars\textemdash provided the grid resolution was fine enough to satisfy the given positivity criterion. The grid resolution required to satisfy the positivity criterion is also based on the fact that the grid size should be small enough to resolve the smallest physical scales present in the flow. 

The prevention of unphysical numerical leakage of the scalar across the interface was achieved by enforcing consistency between the transport of the scalar concentration field and the phase field. It was shown that the equilibrium solution for the proposed scalar-transport model also exhibits a hyperbolic tangent function similar to the equilibrium solution for the phase-field equation, thereby making the transport of the scalar and the phase field consistent.

At the end, the proposed model was assessed of its accuracy, robustness, effectiveness in maintaining the positivity, and in preventing unphysical leakage across the interface, by simulating a wide range of two-phase flows, starting from simple one-dimensional droplet advection flows to complex three-dimensional droplet-laden turbulent flows. The droplet advection flows were used to verify the positivity criterion by choosing parameters that both violate and satisfy the positivity criterion, and by showing that the positivity is maintained in all the cases that satisfied the criterion. The proposed model was also recast into a Nernst-Planck equation and was used to simulate electrokinetics of two-phase flows. Finally, the droplet-laden turbulent channel flow showed the robustness of the model in simulating scalars in complex high-Reynolds-number turbulent two-phase flows.


\section*{Acknowledgments} 

This investigation was supported by the Office of Naval Research, Grant \# N00014-19-1-2425. S. S. Jain is also funded by the Franklin P. and Caroline M. Johnson Fellowship. The computations in this paper were performed on the Shepard, Armstrong and Yellowstone clusters at the Stanford HPC Center, supported through awards from the National Science Foundation, DOD HPCMP, and Office of Naval Research. A preliminary report on this work has been published in \citet{Jain2019scalar}  as a Center-for-Turbulence-Research Annual Research Brief and the authors acknowledge Dr. Ronald Chan's helpful comments on this report.

%
%
%
%
%
%
%

\bibliographystyle{model1-num-names}
\bibliography{scalar}

\end{document}